\documentclass[sigconf, nonacm]{acmart}

\usepackage[ruled,noend]{algorithm2e}
\usepackage{graphicx}
\usepackage{makecell}
\usepackage{subfigure}
\usepackage{tabularx} 
\usepackage{amsthm}
\usepackage{multirow}
\usepackage{array}
\usepackage{makecell}
\newtheorem{definition}{Definition}
\newtheorem{theorem}{Theorem}
\newtheorem{corollary}{Corollary}

\newcommand\vldbdoi{XX.XX/XXX.XX}

\newcommand\vldbvolume{20}
\newcommand\vldbissue{1}
\newcommand\vldbyear{2026}
\newcommand\vldbauthors{\authors}
\newcommand\vldbtitle{\shorttitle} 
\newcommand\vldbavailabilityurl{URL_TO_YOUR_ARTIFACTS}
\newcommand\vldbpagestyle{plain} 

\begin{document}
\title{A New Lower Bounding Paradigm and Tighter Lower Bounds for Elastic Similarity Measures}

\author{Zemin Chao}
\affiliation{%
  \institution{Harbin Institute of Technology}
  \city{Harbin}
  \state{Heilongjiang}
  \country{China}
}
\email{chaozm@hit.edu.cn}

\author{Boyu Xiao}
\affiliation{%
  \institution{Harbin Institute of Technology}
  \city{Harbin}
  \state{Heilongjiang}
  \country{China}
 }
\email{2022110524@stu.hit.edu.cn}

\author{Zitong Li}
\affiliation{%
  \institution{Harbin Institute of Technology}
  \city{Harbin}
  \state{Heilongjiang}
  \country{China}
 }
\email{25S103343@stu.hit.edu.cn}

\author{Zhixin Qi}
\affiliation{%
  \institution{Harbin Institute of Technology}
  \city{Harbin}
  \state{Heilongjiang}
  \country{China}
 }
\email{qizhx@hit.edu.cn}

\author{Xianglong Liu}
\affiliation{%
  \institution{Harbin Institute of Technology}
  \city{Harbin}
  \state{Heilongjiang}
  \country{China}
 }
\email{23S003029@stu.hit.edu.cn}

\author{Hongzhi Wang}
\authornote{Corresponding author.}
\affiliation{%
  \institution{Harbin Institute of Technology}
  \city{Harbin}
  \state{Heilongjiang}
  \country{China}
 }
\email{wangzh@hit.edu.cn}

\begin{abstract}
Elastic similarity measures are fundamental to time series similarity search because of their ability to handle temporal misalignments. 
These measures are inherently computationally expensive, therefore necessitating the use of lower bounds to prune unnecessary comparisons. 
This paper proposes a new \emph{Bipartite Graph Edge-Cover Paradigm} for deriving lower bounds, which applies to a broad class of elastic similarity measures.  
This paradigm formulates lower bounding as a vertex-weighting problem on a weighted bipartite graph induced from the input time series.
Under this paradigm, most of the existing lower bounds of elastic similarity measures can be viewed as simple instantiations.
We further propose \textit{BGLB}, an instantiation of the proposed paradigm that incorporates an additional augmentation term, yielding lower bounds that are provably tighter.
Theoretical analysis and extensive experiments on 128 real-world datasets demonstrate that \textit{BGLB} achieves the tightest known lower bounds for six elastic measures (ERP, MSM, TWED, LCSS, EDR, and SWALE). 
Moreover, \textit{BGLB} remains highly competitive for \textit{DTW} with a favorable trade-off between tightness and computational efficiency.
In nearest neighbor search, integrating \textit{BGLB} into filter pipelines consistently outperforms state-of-the-art methods, achieving speedups ranging from $24.6\%$ to $84.9\%$ across various elastic similarity measures.
Besides, \textit{BGLB} also delivers a significant acceleration in density-based clustering applications, validating the practical potential of \textit{BGLB} in time series similarity search tasks based on elastic similarity measures.


\end{abstract}

\maketitle

\pagestyle{\vldbpagestyle}
\begingroup\small\noindent\raggedright\textbf{PVLDB Reference Format:}\\
\vldbauthors. \vldbtitle. PVLDB, \vldbvolume(\vldbissue):  \vldbyear.\\
\endgroup
\begingroup
\renewcommand\thefootnote{}\footnote{\noindent
This work is licensed under the Creative Commons BY-NC-ND 4.0 International License. Visit \url{https://creativecommons.org/licenses/by-nc-nd/4.0/} to view a copy of this license. For any use beyond those covered by this license, obtain permission by emailing \href{mailto:info@vldb.org}{info@vldb.org}. Copyright is held by the owner/author(s). Publication rights licensed to the VLDB Endowment. \\
\raggedright Proceedings of the VLDB Endowment, Vol. \vldbvolume, No. \vldbissue\ %
ISSN 2150-8097. \\
\href{https://doi.org/\vldbdoi}{doi:\vldbdoi} \\
}\addtocounter{footnote}{-1}\endgroup

\ifdefempty{\vldbavailabilityurl}{}{
\vspace{.3cm}
\begingroup\small\noindent\raggedright\textbf{PVLDB Artifact Availability:}\\
The source code, data, and/or other artifacts have been made available at \url{https://github.com/BoyuXiao/BGLB}.
\endgroup
}

\section{Introduction}

Time series similarity measures are fundamental building blocks in data analysis and data mining~\cite{10.14778/1454159.1454226}.
Their core objective is to quantify similarity between time series, enabling downstream workloads such as similarity search~\cite{10.1145/2020408.2020607,10.1145/253262.253332,10.1145/2000824.2000827},
classification~\cite{10.1007/s10618-016-0483-9},
clustering~\cite{10.14778/2735479.2735481,10.14778/3342263.3342648,10.1145/2949741.2949758},
motif discovery~\cite{10.14778/2735471.2735476,7837992},
anomaly detection~\cite{10.14778/3551793.3551830,breunig2000lof},
and broader data mining applications~\cite{5693959}.
Elastic similarity measures constitute one of the most important classes of similarity measures, as they effectively handle temporal distortions, noises and unequal time series lengths through non-linear point-wise mapping.~\cite{10.5555/3000850.3000887, 10.5555/1316689.1316758,10.1145/1066157.1066213,10.14778/1454159.1454226,10.1145/1247480.1247544,10.5555/108235.108244,10.1145/3318464.3389760,10.14778/3594512.3594530}. 

The general principle underlying elastic measures is to compute the similarity between two time series by identifying the optimal alignment between them, which is typically formulated as a minimal-cost sequence of alignment operations.
This process naturally involves expensive dynamic programming procedures, whose costs are proportional to the square of the time series length.
Moreover, theoretical outcomes established on the Strong Exponential Time Hypothesis (SETH) suggest that substantially faster algorithms to compute exact \textit{DTW} or \textit{LCSS} are unlikely to exist~\cite{7354388}.
Consequently, pairwise comparisons based on elastic similarity measures are computationally expensive for large-scale or long time series, motivating effective pruning techniques based on fast lower bounds.

Fortunately, in many time series data mining applications, exact similarity need only be evaluated for potentially similar candidate time series (pairs).
Lower bounds provide an efficient mechanism to filter out dissimilar candidates:
if a cheap-to-compute distance lower bound already exceeds a certain threshold, the candidate can be discarded without unnecessary evaluation of similarity measures.
This pruning strategy has been highly successful for \textit{DTW}, achieving speedups of several orders of magnitude compared to the naive approach in the past decades~\cite{10.1145/2339530.2339576}.

In summary, effective lower bounds are essential for the efficient computation of elastic similarity measures.  
However, existing lower bounds typically model the alignment between two time series \(X \in \mathbb{R}^n\) and \(Q \in \mathbb{R}^m\) as finding an optimal path in the \(m \times n\) cost matrix, and derive a lower bound by aggregating locally minimal admissible costs of the path along rows or columns.  
Although this paradigm has dominated research on elastic distance lower bounds over the past three decades, we reveal a fundamental limitation of it: row-wise (or column-wise) minimization cannot simultaneously ensure coverage of all columns (or rows). Consequently, the resulting bound ignores the cost associated with uncovered columns (or rows), leading to suboptimal lower bounds. See Section~\ref{sec:2-2} for a detailed discussion.

To address the aforementioned limitation, this paper proposes a fundamentally new paradigm for lower bounding elastic similarity measures.
Rather than using traditional dynamic programming cost matrix, we solve the problem with an induced bipartite graph, where vertices correspond to elements of two time series $X$ and $Q$, and the edge weights encode the costs of alignment operations (e.g. matching, insertion, and deletion) as defined by the underlying elastic measure.

We prove that any assignment of non-negative weights to the vertices of this graph that satisfies a set of simple, measure-specific constraints yields a valid lower bound on the true elastic distance.  
This formulation unifies lower bound design: different feasible vertex weightings give rise to different lower bounds, and notably, most existing methods can be viewed as special cases of this general paradigm.

Building upon this paradigm, we design \textit{BGLB} (Bipartite Graph-based Lower Bound),
which achieves the tightest known bounds for elastic similarity measures such as  \textit{LCSS}~\cite{994784},  \textit{EDR}~\cite{10.1145/1066157.1066213},  \textit{SWALE}~\cite{10.1145/1247480.1247544},  \textit{ERP}, \textit{MSM}~\cite{6189346} and \textit{TWED}.
For \textit{DTW}, while certain DTW-specific bounds can be slightly tighter, 	\textit{BGLB} remains among the strongest general-purpose bounds and provides a favorable tightness--efficiency trade-off. 

Experimental results demonstrate that \textit{BGLB} significantly accelerates elastic similarity search.
Integrating \textit{BGLB} into nearest neighbor search consistently yields speedups of $24.6\%$ to $84.9\%$ over the state of the art across a wide range of real-world datasets and elastic similarity measures.
Furthermore, \textit{BGLB} also demonstrates significant benefits in density-based clustering applications such as DBSCAN, underscoring its effectiveness in diverse tasks that rely on elastic similarity search.

The main contributions of this paper are summarized as follows:
\begin{enumerate}
\item 
This paper presents a new paradigm for computing the lower bounds of various elastic similarity measures.
This paradigm reformulates the lower bounding problem as a vertex-weighting task on a weighted bipartite graph induced by the input time series, where edge weights encode matching and deletion costs.
We prove that the lower bound of elastic similarity measures can be casted as a vertices assigning problem on the induced graph.
This graph-based paradigm subsumes most existing lower bounds as special cases.

\item
The proposed paradigm ensures that lower bound explicitly accounts for the coverage of every element in both time series, thereby avoiding the pitfall of traditional matrix-based paradigms, where certain rows or columns (which fundamentally correspond to time series elements) may be overlooked. 
Therefore, the new paradigm leads to more effective lower bounds for various elastic similarity measures.

\item 
Based on the above paradigm, we propose \textit{BGLB} (Bipartite Graph-based Lower Bound), a unified and tighter lower bound applicable across a broad class of elastic similarity measures.  
\textit{BGLB} can be viewed as an instantiation that enhances the traditional lower bound by incorporating an additional augmentation term.
Theoretically, \textit{BGLB} is provably tighter than the state-of-the-art general-purpose bound \textit{GLB} and achieves the tightest known lower bounds for ERP, MSM, TWED, LCSS, EDR, and SWALE, while remaining highly competitive for DTW with a favorable trade-off between tightness and computational efficiency.

\item 
Extensive experiments on 128 real-world UCR datasets demonstrate that integrating \textit{BGLB} into similarity search pipelines yields consistent and substantial performance gains. In 1-NN classification tasks, BGLB delivers speedups ranging from $24.6\%$ to $84.9\%$ over existing methods across seven elastic measures. Furthermore, in density-based clustering (e.g., exact DBSCAN), \textit{BGLB} accelerates end-to-end runtime by up to $29\%$ compared to \textit{GLB}, without compromising result accuracy, thereby validating its practical utility in real-world time series analytics.

\end{enumerate}

\section{Preliminaries and Related Work}
\subsection{Key Elastic Similarity Measures}
\label{2-1}

Recent studies show that no single elastic similarity measure consistently achieves the best performance across all time series datasets~\cite{10.1145/3318464.3389760}. 

For any two series $X \in \mathbb{R}^n$, $Q \in \mathbb{R}^m$, elastic measures compute a Raw Accumulated Cost matrix $RAC$ via dynamic programming, where diagonal/vertical/horizontal moves correspond to three cost functions, and the final distance is obtained by transforming the value obtained from the raw accumulated cost $RAC(n,m)$~\cite{10.14778/3594512.3594530}:
\begin{equation}
\label{eq:elastic-general}
RAC(i,j)=
\begin{cases}
\textit{initial\_distance}(x_1,q_1), & i=j=1\\
RAC(i-1,j)+\mathbf{D_V}(x_i,q_1), & i\neq 1,\, j=1\\
RAC(i,j-1)+\mathbf{D_H}(x_1,q_j), & i=1,\, j\neq 1\\
\min\!\left\{
\begin{array}{l}
RAC(i-1,j-1)+\mathbf{M}(x_i,q_j)\\
RAC(i-1,j)+\mathbf{D_V}(x_i,q_j)\\
RAC(i,j-1)+\mathbf{D_H}(x_i,q_j)
\end{array}\right., & i\neq 1,\, j\neq 1
\end{cases}
\end{equation}
where \(D_H\), \(D_V\), and \(M\) are the costs depending on the underlying elastic distance measures. 
The elastic distance is defined as $\mathrm{ELD}(X,Q)=\mathrm{trans}(RAC(n,m))$.
The specific cost for the elastic measures considered in this paper are summarized in Table \ref{tab:elastic-measures-block}.

\newlength{\wRow}  \setlength{\wRow}{10mm}      
\newlength{\wDTW}  \setlength{\wDTW}{12mm}
\newlength{\wLCSS} \setlength{\wLCSS}{19mm}
\newlength{\wEDR}  \setlength{\wEDR}{19mm}
\newlength{\wSWA}  \setlength{\wSWA}{19mm}
\newlength{\wERP}  \setlength{\wERP}{12mm}
\newlength{\wMSM}  \setlength{\wMSM}{42mm}
\newlength{\wTWED} \setlength{\wTWED}{32mm}

\begin{table*}[t]
\caption{Movement cost functions (match/insert/delete) and the final transformation for seven elastic measures.}
\label{tab:elastic-measures-block}
\centering
\fontsize{6}{8}\selectfont
\setlength{\tabcolsep}{2pt}
\renewcommand{\arraystretch}{1.2}

\begin{tabular}{|
  >{\centering\arraybackslash}p{\wRow} |
  >{\centering\arraybackslash}p{\wDTW} |
  >{\centering\arraybackslash}p{\wLCSS}|
  >{\centering\arraybackslash}p{\wEDR} |
  >{\centering\arraybackslash}p{\wSWA} |
  >{\centering\arraybackslash}p{\wERP} |
  >{\centering\arraybackslash}p{\wMSM} |
  >{\centering\arraybackslash}p{\wTWED}|
}

\hline
& \multicolumn{1}{c|}{\textbf{DTW}}
& \multicolumn{1}{c|}{\textbf{LCSS}}
& \multicolumn{1}{c|}{\textbf{EDR}}
& \multicolumn{1}{c|}{\textbf{SWALE}}
& \multicolumn{1}{c|}{\textbf{ERP}}
& \multicolumn{1}{c|}{\textbf{MSM}}
& \multicolumn{1}{c|}{\textbf{TWED}} \\
\hline

$\mathbf{M}(x_i,q_j)$
& $(x_i-q_j)^2$
& $\begin{cases}
1 & \text{if } |x_i-q_j|\le \varepsilon\\
0 & \text{otherwise}
\end{cases}$
& $\begin{cases}
0 & \text{if } |x_i-q_j|\le \varepsilon\\
1 & \text{otherwise}
\end{cases}$
& $\begin{cases}
r & \text{if } |x_i-q_j|\le \varepsilon\\
p & \text{otherwise}
\end{cases}$
& $(x_i-q_j)^2$
& $|x_i-q_j|$
& $\begin{array}{l}
|x_i-q_j|+|x_{i-1}-q_{j-1}|\\
 +\nu\big(|t^x_i-t^x_{i-1}|+|t^q_j-t^q_{j-1}|\big)
\end{array}$
\\ \hline

$\mathbf{D_V}(x_i,q_j)$
& $(x_i-q_j)^2$
& $0$
& $1$
& $p$
& $(x_i-g)^2$
& $\begin{cases}
c, \text{if } q_{j-1}\le q_j\le x_i \ \text{or}\ q_{j-1}\ge q_j\ge x_i\\
c+\min\{|q_j-q_{j-1}|,\ |q_j-x_i|\},   \text{otherwise}
\end{cases}$
& $|x_i-x_{i-1}|+\nu|t^x_i-t^x_{i-1}|+\lambda$
\\ \hline

$\mathbf{D_H}(x_i,q_j)$
& $(x_i-q_j)^2$
& $0$
& $1$
& $p$
& $(q_j-g)^2$
& $\begin{cases}
c, \text{if } x_{i-1}\le x_i\le q_j \ \text{or}\ x_{i-1}\ge x_i\ge q_j\\
c+\min\{|x_i-x_{i-1}|,\ |x_i-q_j|\},  \text{otherwise}
\end{cases}$
& $|q_j-q_{j-1}|+\nu|t^q_j-t^q_{j-1}|+\lambda$
\\ \hline

$\mathbf{trans()}$
& $\sqrt{RAC(n,m)}$
& $1-\dfrac{RAC(n,m)}{\min(n,m)}$
& $RAC(n,m)$
& $RAC(n,m)$
& $RAC(n,m)$
& $RAC(n,m)$
& $RAC(n,m)$
\\ \hline

\end{tabular}
\end{table*}

In fact, the proposed approach is applicable to any elastic similarity measure that conforms to Equation~\eqref{eq:elastic-general}.  
However, due to space constraints, we restrict our detailed discussion to the most widely adopted elastic measures.  
Specifically, we focus on the following seven representative elastic similarity measures:

\noindent \textbf{Dynamic Time Warping (DTW)}~\cite{Sakoe1971ADP,1163055} is one of the most widely used similarity measures for time series, with applications spanning domains such as speech processing, biomedical signal analysis, and sensor monitoring.

\noindent \textbf{Longest Common Subsequence (LCSS)}~\cite{994784} was first developed for pattern matching over text strings and later adapted to quantify similarity between time series. 
LCSS increases the similarity score by $1$ when two elements are deemed to match, and by $0$ otherwise. 
The LCSS distance is obtained by normalizing the LCSS similarity by the length of the shorter time series.

\noindent \textbf{Edit Distance on Real Sequences (EDR)}~\cite{10.1145/1066157.1066213} is an adaptation of string edit distance to real-valued sequences. 
EDR computes the distance by assigning penalties to mismatches, rather than awarding a score for matches.

\noindent \textbf{Sequence Weighted Alignment (SWALE)}~\cite{10.1145/1247480.1247544} generalizes EDR by introducing two parameters: a match reward $r$ and a mismatch penalty $p$, replacing the fixed match/mismatch scores used in EDR.

\noindent \textbf{Edit Distance with Real Penalty (ERP)}~\cite{10.5555/1316689.1316758} is closely related to DTW in that it uses the squared difference between two aligned elements, $(x_i-q_j)^2$, as the diagonal (match) cost. Unlike DTW, ERP introduces a gap/reference value $g$ to define the costs of horizontal and vertical moves.

\noindent \textbf{Move-Split-Merge (MSM)}~\cite{6189346} integrates desirable properties from multiple elastic measures and is a translation-invariant metric. MSM assigns a constant cost $c$ to the \textit{split} operation (replicating the previous element) and the \textit{merge} operation (combining two identical elements into one), which correspond to vertical and horizontal moves in the distance matrix. The \textit{move} operation corresponds to a diagonal move, with cost $|x_i-q_j|$.

\noindent \textbf{Time Warp Edit Distance (TWED)}~\cite{4479483} penalizes differences in timestamps in addition to differences in numeric values. TWED uses a parameter $v$ to penalize timestamp discrepancies in all three move types, and additionally applies a stiffness parameter $\lambda$ in horizontal and vertical moves to control the amount of warping.

\subsection{Summary of Existing Lower Bound Methodologies}
\label{sec:2-2}

There exists a substantial number of lower bounds methods for different elastic similarity measures~\cite{10.5555/2993953.2994027,KEOGH2002406,914875,doi:10.1137/1.9781611975321.27,LEMIRE20092169,WEBB2021107895,10.14778/3594512.3594530,10.5555/1316689.1316758,Tan2020FastEE,10.1145/956750.956777,Vlachos2006Indexing}.  
Due to space limitations, we do not provide detailed descriptions of these lower bounds. 
Nevertheless, a common characteristic unifies most of these existing approaches: they rely on a strip-based strategy to derive lower bounds for elastic similarity measures.

We use \( \text{LB}_{\text{Keogh}} \) as a representative example to illustrate the conventional strip-based paradigm. In this approach, an \( n \times m \) cost matrix is constructed, as shown in Figure~\ref{fig_old_lb}, where the entry at row \( i \) and column \( j \) represents the matching cost between \( q_i \) and \( x_j \). For DTW, this cost is typically \( (x_j - q_i)^2 \).

\begin{figure}[h]
  \centering
    \includegraphics[width=\linewidth]{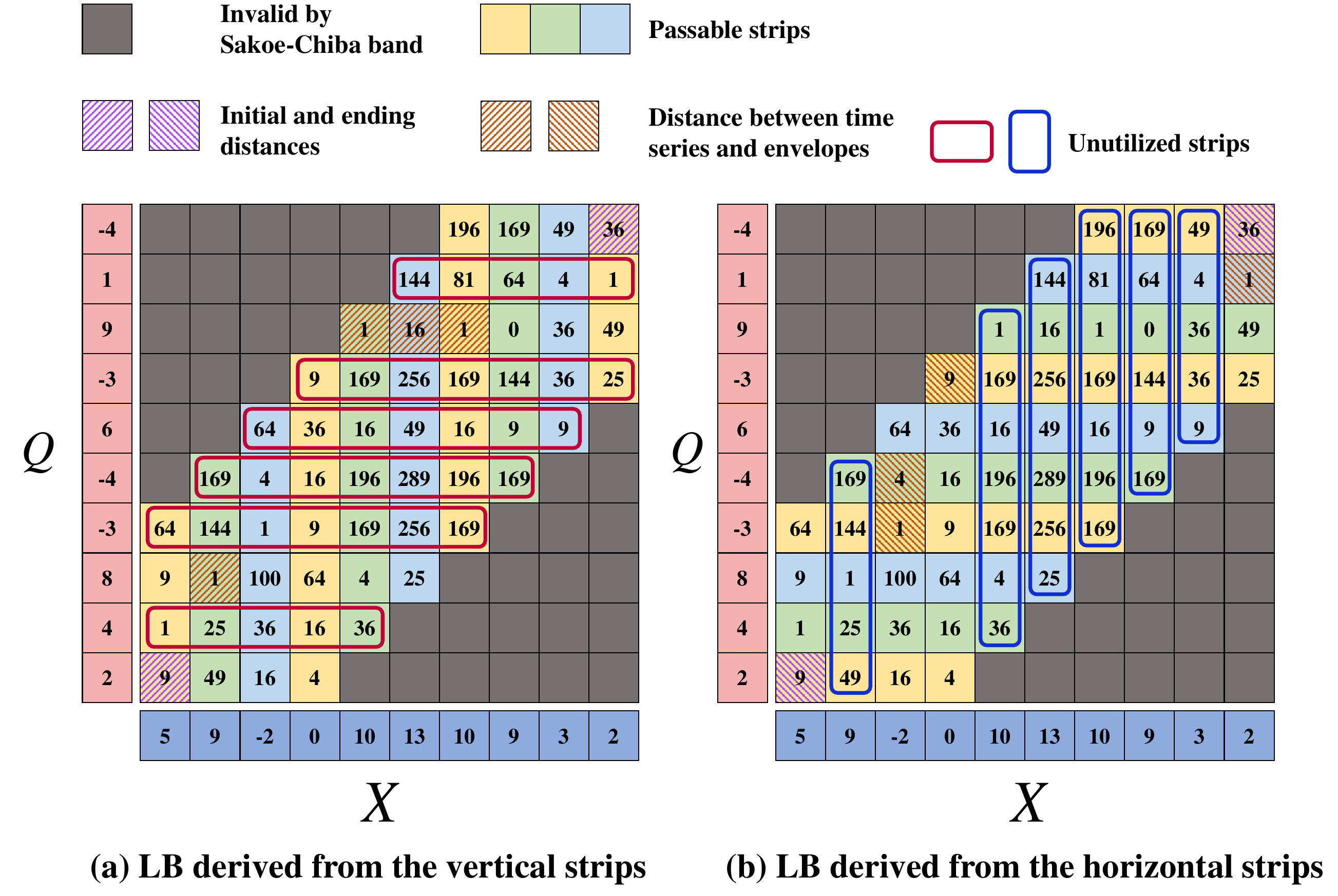}
  \caption{
Illustration of the strip-based methodology used by existing lower bounds.
Such methods typically take the minimum cost from each vertical (or horizontal) strip, potentially leaving the other dimension underutilized.
}
  \Description{Description of Existing Lower Bounds.}
  \label{fig_old_lb}
\end{figure}

The exact elastic distance corresponds to the minimum-cost path from the bottom-left to the top-right corner of the cost matrix.  
To obtain a valid lower bound i.e., an underestimate that never exceeds the true distance, existing methods partition the matrix into non-overlapping vertical strips (Figure~\ref{fig_old_lb}(a)) or horizontal strips (Figure~\ref{fig_old_lb}(b)). 

Since traversing each strip incurs a non-negative cost that depends on where the path crosses it, one can compute the minimum possible cost to cross each individual strip. The total cost of any valid path must therefore be at least the sum of these per-strip minima. Consequently, the strip-based paradigm derives lower bounds for elastic distance measures by summing the minimal admissible cost within each strip.

This paradigm underlies most existing lower bounds of elastic similarity measures, including \( \text{LB}_{\text{Keogh}} \)  and \textit{GLB}~\cite{10.14778/3594512.3594530}.
However, this paradigm suffers from a fundamental limitation: 
it computes only local minima along vertical or horizontal strips, potentially overlooking coverage requirements in the orthogonal direction.
As shown in Figure~\ref{fig_old_lb}(a), while the cost of passing each vertical strips is fully accounted for, many horizontal strips (highlighted by red rounded rectangles) remain uncovered and contribute zero to the bound—even though any valid warping path must incur a non-negative cost to pass through every horizontal strip. 
Similarly, horizontal strip partitioning may omit the cost of passing vertical strips (blue borders in Figure~\ref{fig_old_lb}(b)).
This is logically equivalent to the elements corresponding to the ignored horizontal or vertical strips being omitted and leads to suboptimal lower bounds.

The true warping path in elastic measures is required to be both continuous and monotonic, which requires that every row (horizontal strips) and every column (vertical strips) of the cost matrix be visited. The strip-based construction, however, does not enforce this global coverage constraint. This mismatch between the lower bound formulation and the structural properties of optimal alignment paths leads to suboptimal tightness and leaves significant room for improvement.

This occurs because multiple strips may select cells from the same row (or column), leaving other rows (or columns) unused. 
Specifically, for vertical strips, each element \( x_j \in X \) is associated with a passable region defined by the warping window around \( q_i \), and it may contribute to at most one cell in its corresponding column. The minimal cost in each vertical strip is identified as the lowest-cost cell within that strip; these cells are marked with brown diagonal stripes in Figure~\ref{fig_old_lb}. 
Clearly, neither horizontal nor vertical strips alone can resolve this issue.

To overcome the limitation above, this paper proposes a new paradigm to compute the lower bounds.
We reformulate the elastic measures lower bounding problem as an assignment problem on the induced bipartite graph: vertices on one side represent elements of time series \( X \), and those on the other side represent elements of time series \( Q \); edges between them are weighted by the cost of corresponding alignment operations. 
The problem then reduces to computing a minimum-weight edge cover on this bipartite graph, which inherently enforces full coverage of every element in both time series when estimating the lower bounds.

\begin{figure*}[t]
  \centering
  \includegraphics[width=\linewidth]{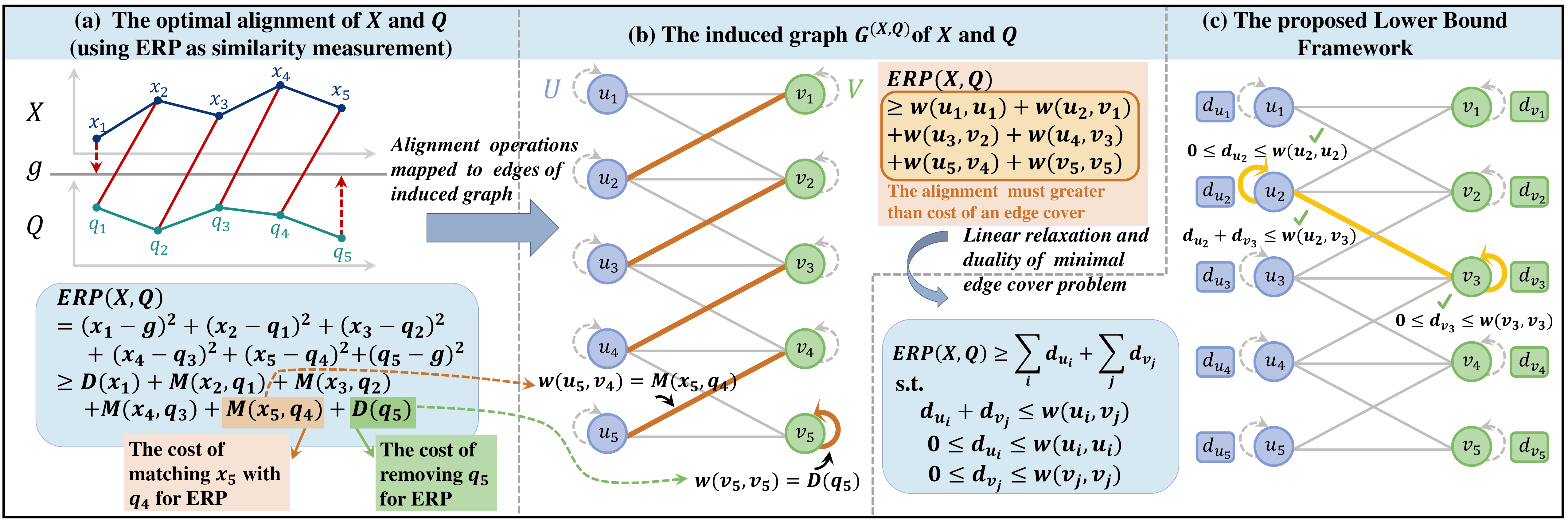}
  \caption{Overview of the Bipartite Graph based Framework, where the lower bound of elastic similarity measure is mapped to the assigning on the Induced graph $G^{(X,Q)}$.(a)The optimal alignment of $X$ and $Q$ under ERP. (b)The corresponding vertex cover on induced bipartite graph $G^{(X,Q)}$. (c) The valid assignment based on linear-dual and relation.}
  \label{fig:erp_overview}
\end{figure*}

\section{A Unified Lower Bound Framework Based on Induced Graphs}

This section introduces our \emph{Bipartite Graph Edge-Cover Paradigm} for deriving lower bounds of elastic similarity measures.
Figure~\ref{fig:erp_overview} illustrates the methodology of this proposed paradigm.
First, we simplify the general formulation of elastic distance measures by abstracting the alignment of two time series as a sequence of alignment operations consisting only of element-wise matching and (abstracted) element deletion (Section \ref{subsection_overview}).  
Building upon this unified view, Sections \ref{sec_defination_of_G} and \ref{section_mapping_lb_to_G} introduce the induced graph of the input time series $X$, $Q$ and prove that computing a lower bound on elastic similarity measures can be mapped to a minimum-weight edge cover problem on this graph.
Finally, by applying linear programming relaxation and duality, we establish a new computational paradigm for lower bounding elastic distances.
That is, for any elastic measure to satisfy the definition in Equation \eqref{eq:elastic-general}, the problem of computing its lower bound reduces to a vertex-weighting problem on the induced graph (Section \ref{section_releasing_the_problem}).
Within this paradigm, existing generalized bounds such as \textit{GLB} can be interpreted as particular instantiations, and the next section derives \textit{BGLB} as a tighter instantiation.

\subsection{The Observations on Elastic Similarity Measures}
\label{subsection_overview}

All the elastic similarity measures mentioned in Section~\ref{2-1} can be interpreted in terms of \emph{matching}, \emph{insertion} and \emph{deletion} operations between two time series.
Although insertion operations are formally defined in some measures, any insertion on one sequence can be replaced with a deletion on the other \footnote{Essentially, this is due to the symmetry inherent in elastic distance measures, which requires that the costs of inserting and deleting elements be symmetric.}.
That is, an insertion on $X$ can be viewed as deleting the corresponding element from $Q$.

Let $D(x_i)$ denote the least cost to "delete" $x_i$ from $X$, which is, 
\begin{equation}
 D(x_i)=\min\limits_{j\in \{i-\omega,\dots ,i+\omega\}} \{D_V(x_i,q_j), D_H(x_i,q_j) \}.   
\end{equation}

Then the following inequality holds for the raw accumulated cost $RAC$:


\begin{equation}
\label{eq:unified_two_ops}
RAC(i,j) \geq
\begin{cases}
\textit{initial\_distance}(x_1,q_1), & i=j=1\\
RAC(i-1,j)+\mathbf{D}(x_i), & i\neq 1,\, j=1\\
RAC(i,j-1)+\mathbf{D}(q_j), & i=1,\, j\neq 1\\
\min\!\left\{
\begin{array}{l}
RAC(i-1,j-1)+\mathbf{M}(x_i,q_j)\\
RAC(i-1,j)+\mathbf{D}(x_i)\\
RAC(i,j-1)+\mathbf{D}(q_j)
\end{array}\right., & i\neq 1,\, j\neq 1
\end{cases}
\end{equation}
where $M(x_i,q_j)$ denotes the cost to match $x_i$ and $q_j$.


For two time series $X$ and $Q$,
our observations on elastic similarity measures are as follows:

\noindent(1) Every element $x_i \in X$ must be either deleted (paying a cost $D(x_i)\geq 0$) or matched with some element $q_j \in Q$ (with a cost $M(x_i,q_j) \geq 0$).

\noindent(2) Equally, every element $q_j \in Q$ should be deleted (paying a cost $D(q_j)\geq 0$) or matched with some element $x_i \in X$ (with a cost $M(x_i,q_j) \geq 0$).

\noindent(3)There exists at least one sequence of match or delete operations that aligns \(X\) and \(Q\), and \(RAC(n,m)\) is the upper bound of the minimum total cost on all such operation sequences.

These observations lead directly to Corollary~\ref{corollary_obeservation}.

\begin{corollary}
\label{corollary_obeservation}
Given any time series $X$, $Q$ and elastic similarity measure, there exists a set of basic operations $\mathbb{S}$, such that

\begin{center}
$D(x_i) \in \mathbb{S}$ or $M(x_i,q_j) \in \mathbb{S}$ for every $x_i \in X$,
\end{center}
\begin{center}
$D(q_j) \in \mathbb{S}$ or $M(x_i,q_j) \in \mathbb{S}$ for every $q_j \in Q$,
\end{center}

and
\begin{equation}
RAC(n,m) \ge \sum_{s\in \mathbb{S}} Cost(s),
\end{equation}
where $Cost(s)$ is the cost of the corresponding basic operation.
\end{corollary}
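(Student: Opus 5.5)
The plan is to extract $\mathbb{S}$ from an optimal path through the recursion in inequality~\eqref{eq:unified_two_ops}, using induction on $i+j$. Define an auxiliary table $L(i,j)$ by the right-hand side of \eqref{eq:unified_two_ops} taken with equality. The base case is $L(1,1)=\textit{initial\_distance}(x_1,q_1)$, and in the boundary row and column $D(x_i)$ and $D(q_j)$ replace $D_V$ and $D_H$. The first step is to show $RAC(i,j)\ge L(i,j)$ for all $i,j$. This follows by induction on $i+j$. At each cell we compare the three candidates in \eqref{eq:elastic-general} termwise with those in \eqref{eq:unified_two_ops}, using the inductive hypothesis together with $D_V(x_i,q_j)\ge D(x_i)$ and $D_H(x_i,q_j)\ge D(q_j)$. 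These two inequalities hold because $D(\cdot)$ is a minimum over the window containing $j$, or over $i$ respectively. Since $\min$ is monotone, the inequality passes through each case. The same argument gives the boundary cases.

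The second step is to build $\mathbb{S}$ by tracing back an argmin path in $L$ from $(n,m)$ to $(1,1)$. Each step records one basic operation:
\begin{itemize}
\item a diagonal step into $(i,j)$ records $M(x_i,q_j)$;
\item a vertical step into $(i,j)$ records $D(x_i)$;
\item a horizontal step into $(i,j)$ records $D(q_j)$.
\end{itemize}
Unrolling the recursion along this path gives $L(n,m)=L(1,1)+\sum_{s\in\mathbb{S}'}Cost(s)$, where $\mathbb{S}'$ is the set of recorded steps. For the cell $(1,1)$ I would add $M(x_1,q_1)$ to $\mathbb{S}$, so that $\mathbb{S}=\mathbb{S}'\cup\{M(x_1,q_1)\}$. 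This needs $\textit{initial\_distance}(x_1,q_1)\ge M(x_1,q_1)$, or else an $\textit{initial\_distance}$ cost that dominates a deletion of $x_1$ and of $q_1$; I would check this per measure against Table~\ref{tab:elastic-measures-block}. For DTW, ERP, and MSM the initial cost is the match cost. For LCSS, all costs are nonnegative, so the initial value $\ge 0$ suffices only if the matching term is accounted for in the similarity form. I would treat that measure through its transformed cost, or simply note the assumption that the initial cost equals the match cost, as in the general framework.

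The third step is coverage. The traced path is monotone and connected from $(1,1)$ to $(n,m)$. Every step increases $i$, $j$, or both by exactly one, so each index $i\in\{1,\dots,n\}$ is entered exactly once. Index $i$ is entered either at $(1,1)$ or by a vertical or diagonal step into row $i$. Hence every $x_i$ has either $D(x_i)$ or some $M(x_i,q_j)$ in $\mathbb{S}$, and symmetrically every $q_j$ is covered. Combining the pieces gives $RAC(n,m)\ge L(n,m)=\sum_{s\in\mathbb{S}}Cost(s)$.

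I expect the main obstacle to be the windowed definition of $D$ together with the boundary and initial-cell conventions. The inequality $D_V(x_i,q_j)\ge D(x_i)$ requires $j$ to lie in the window around $i$. So I would argue that cells outside the warping band have $RAC=\infty$, which makes those cases vacuous. The initial-cell identification with a match cost also needs the per-measure check described above.
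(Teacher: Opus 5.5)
Your proposal is correct and makes rigorous the argument the paper only asserts when it says Observations (1)--(3) ``lead directly'' to the corollary: follow an optimal alignment path, charge each step to a match or an abstracted deletion whose cost it dominates (via $D_V(x_i,q_j)\ge D(x_i)$ and $D_H(x_i,q_j)\ge D(q_j)$), and use that a monotone path from $(1,1)$ to $(n,m)$ enters every row and column. The conventions you flag are unstated assumptions of the paper rather than defects in your argument: a band-constrained $RAC$, which Equation~\eqref{eq:elastic-general} never states but the windowed $D$ requires; the initial cell; and LCSS, whose tabulated costs do not even satisfy $D_V\ge D$. Tracing back directly in the band-constrained $RAC$, rather than in an unbanded $L$, would also keep every recorded match inside the window, as Theorem~\ref{theroem_min_edge_cover} later needs.
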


Clearly, the costs of deletion and matching operations vary across different elastic measures; the specific definitions for each measure are provided in Table~\ref{table_DEL_and_matching_cost}.

\begin{table}[htbp]
\centering
\caption{Matching Cost and Generalized Deletion Cost for Different Elastic Similarity Measures}
\label{table_DEL_and_matching_cost}
\scalebox{0.88}{
\begin{tabular}{|c|c|c|}
\hline
\textbf{Method} & \textbf{Matching Cost } $M(x_i, q_j)$ & \textbf{Deletion Cost} $D(x_i)$ \\
\hline
\textit{MSM} & $|x_i - q_j|$ & $c$ \\
\hline
\textit{ERP} & $(x_i - q_j)^2$ & $(x_i - g)^2$ \\
\hline
\textit{TWED} & $\begin{array}{l}
|x_i-q_j|+|x_{i-1}-q_{j-1}|\\
 +\nu\big(|t^x_i-t^x_{i-1}|+|t^q_j-t^q_{j-1}|\big)
\end{array}$ & 
    $|x_i-x_{i-1}|+\nu|t^x_i-t^x_{i-1}|+\lambda$ \\
\hline
\textit{DTW} & $(x_i - q_j)^2$ & $ \min\limits_{j\in \{i-\omega,i+\omega\}}(x_i - q_j)^2$ \\ 
\hline
\textit{LCSS} & $\begin{cases}
        0 & \text{if } |x_i - q_j| \leq \epsilon \\
        1 & \text{otherwise}
    \end{cases}$ & 1 \\
\hline
\textit{EDR} & $\begin{cases}
        0 & \text{if } |x_i - q_j| \leq \epsilon \\
        1 & \text{otherwise}
    \end{cases}$ & 1 \\
\hline
\textit{SWALE} & $\begin{cases}
        r & \text{if } |x_i - q_j| \leq \epsilon \\
        p & \text{otherwise}
    \end{cases}$ & $p$ \\
\hline
\end{tabular}
}
\end{table}

\subsection{The Induced Graph of Elastic Measures}
\label{sec_defination_of_G}

\begin{table}[t]
\fontsize{5.5}{8}\selectfont 
\centering
\caption{Important Notations.}
\label{tab:notation}
\setlength{\tabcolsep}{4pt}
\renewcommand{\arraystretch}{1.15}
\begin{tabularx}{\linewidth}{|>{\centering\arraybackslash}p{0.25\linewidth}|>{\centering\arraybackslash}X|}
\hline
\textbf{Symbol} & \textbf{Meaning}\\
\hline
$\omega$ & Warping window band width used to constrain admissible matches. \\
\hline
$M(x_i,q_j)$ & Matching cost between $x_i$ and $q_j$. \\
\hline
$D(x_i)$,  & Deletion costs. \\
\hline
$G^{(X,Q)}$ & Induced bipartite graph for $(X,Q)$. \\
\hline
$w_e$ & Weight of edge $e$ in $G^{(X,Q)}$ induced by the corresponding operation cost. \\
\hline
$\{d_i\}_{i\in U\cup V}$ & Dual-feasible vertex weights satisfying Eq.~(\ref{equ_condition_DLP}). \\
\hline
$LB(\cdot,\cdot)$, $LB^*(\cdot,\cdot)$ & Lower bound and its symmetric version (max over swapped arguments). \\
\hline
\end{tabularx}
\end{table}

Given two time series \(X \in \mathbb{R}^n\) and \(Q \in \mathbb{R}^m\), along with an elastic similarity measure, we construct an undirected weighted graph  
\(G^{(X,Q)} = (U \cup V, E_1 \cup E_2)\), referred to as the induced graph of \(X\) and \(Q\).  
An illustration of this construction is shown in Figure~\ref{fig_The_Graph}.

For any two sequences $X \in \mathbb{R}^n$, $Q \in \mathbb{R}^m$ and an elastic similarity measure, 
we construct an undirected weighted graph $G^{(X,Q)}=(U \cup V, E_1\cup E_2)$, which is referred to as the induced graph of $X$ and $Q$.
Figure~\ref{fig_The_Graph} provides an illustration of the induced graph construction.

\begin{figure}[ht]
	\centering
    \includegraphics[width=0.65\linewidth]{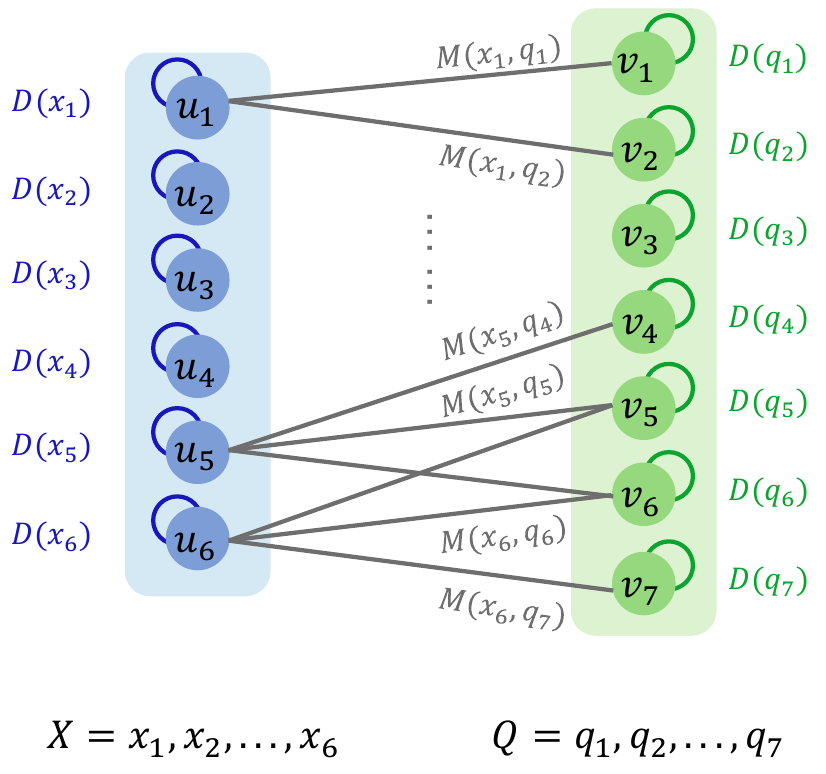}
	\caption{
        Illustration of the induced graph $G^{(X,Q)}$, which is a weighted bipartite graph with self-loops. 
        The weight of each edge in the graph is non-negative, with its specific value determined by both the distance measure and the values of the corresponding elements.
    }
	\label{fig_The_Graph}
\end{figure}

\noindent\textbf{Vertices (\(U\) and \(V\)).}  
We create a vertex \(u_i \in U\) for each element \(x_i \in X\), and a vertex \(v_j \in V\) for each element \(q_j \in Q\).

\noindent\textbf{Self-Loop Edges (\(E_2\))}.
For each \(u_i \in U\), we add a self-loop \((u_i, u_i)\) with weight \(w{(u_i,u_i)} = D_X(x_i) \geq 0\), representing the cost of deleting \(x_i\) from \(X\).  
Similarly, for each \(v_j \in V\), we add a self-loop \((v_j, v_j)\) with weight \(w{(v_j,v_j)} = D_Q(q_j) \geq 0\), corresponding to the deletion of \(q_j\) from \(Q\).  
The exact form of the deletion cost \(D(x_i)\) depends on the elastic measure. For instance:
In EDR, \(D(x_i) = 1\) for all \(x_i\);
In ERP, \(D(x_i) = (x_i - g)^2\), where \(g \in \mathbb{R}\) is a user-defined gap parameter.

\noindent\textbf{Cross Edges (\(E_1\))}.
We connect \(u_i \in U\) and \(v_j \in V\) with an undirected edge \((u_i, v_j)\) if \(|i - j| \leq \omega\), where $\omega$ is a warping window parameter.
The weight of $(u_i,v_j)$ is $w{(u_i,v_j)}=M(x_i,q_j)$, representing the matching cost between \(x_i\) and \(q_j\).
As with deletion costs, \(M(\cdot,\cdot)\) is defined by the specific elastic similarity measure.  
Some important notations are listed in Table \ref{fig_The_Graph}.

\subsection{Deriving Lower Bounds of Elastic Similarity Measures from the Induced Graph}
\label{section_mapping_lb_to_G}

Let \(ELD(X, Q)\) denote the abstract elastic distance between two time series \(X \in \mathbb{R}^n\) and \(Q \in \mathbb{R}^m\).  
Theorem~\ref{theroem_min_edge_cover} establishes that a valid lower bound on \(ELD(X, Q)\) can be obtained by computing the minimum-weight edge cover on the induced graph \(G^{(X,Q)}\).

\begin{theorem}
\label{theroem_min_edge_cover}
Given any \(X \in \mathbb{R}^n\) and \(Q \in \mathbb{R}^m\), let \(G^{(X,Q)} = (U \cup V, E_1 \cup E_2)\) be the induced graph constructed from \(X\) and \(Q\), and let \(S^* \subseteq E_1 \cup E_2\) be a minimum-weight edge cover of \(G^{(X,Q)}\).  
    Then the corresponding elastic similarity measure $ELD(X, Q)$ satisfies:
    
    \begin{equation}
    \label{equ_minimum_edge_cover}
        ELD(X,Q) \geq trans(\sum_{e\in S^*}w_e),
    \end{equation}
where $w_e$ is the weight of the edge $e \in  E_1 \cup E_2$ and $trans(\cdot)$ is the post transformation of the specific elastic measure listed in Table \ref{tab:elastic-measures-block}.
\end{theorem}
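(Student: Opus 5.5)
The plan is to turn the operation set $\mathbb{S}$ from Corollary~\ref{corollary_obeservation} into an edge cover of $G^{(X,Q)}$ of the same total weight. Minimality of $S^*$ then gives the inequality, and monotonicity of $trans(\cdot)$ transfers it to $ELD(X,Q)$.

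First, I take the set $\mathbb{S}$ guaranteed by Corollary~\ref{corollary_obeservation}. Concretely, I take the match/delete operations along an optimal path realizing $RAC(n,m)$, viewed through the relaxed recurrence~\eqref{eq:unified_two_ops}. Each diagonal step matching $x_i$ with $q_j$ contributes $M(x_i,q_j)$. Each vertical or horizontal step deleting $x_i$ or $q_j$ contributes at least $D(x_i)$ or $D(q_j)$, because $D$ was defined as the minimum of $D_V,D_H$ over the window. I map each operation to an edge of $G^{(X,Q)}$: a match $M(x_i,q_j)$ goes to the cross edge $(u_i,v_j)\in E_1$, and a deletion $D(x_i)$ (resp.\ $D(q_j)$) goes to the self-loop $(u_i,u_i)$ (resp.\ $(v_j,v_j)$) in $E_2$. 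Call the resulting edge set $S$.

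Second, I verify that $S$ is an edge cover. By the two covering conditions of Corollary~\ref{corollary_obeservation}, every $u_i$ is incident to either its self-loop or some cross edge in $S$, and likewise every $v_j$. For matched pairs I must check that $(u_i,v_j)$ actually exists, i.e.\ $|i-j|\le\omega$. This holds because the warping window constrains the DP to cells inside the band, so every diagonal step of an admissible path lies there. Edge weights equal operation costs by construction, so $\sum_{e\in S} w_e=\sum_{s\in\mathbb{S}}Cost(s)\le RAC(n,m)$. One care point: if $\mathbb{S}$ contains an operation twice, or a vertex is covered more than once, the inequality only improves when I drop duplicates, since all weights are nonnegative.

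Third, minimality of $S^*$ gives $\sum_{e\in S^*}w_e\le \sum_{e\in S}w_e\le RAC(n,m)$. I then apply $trans$. For DTW ($\sqrt{\cdot}$) and for the identity transforms (EDR, SWALE, ERP, MSM, TWED), $trans$ is nondecreasing, so $trans(\sum_{S^*}w_e)\le trans(RAC(n,m))=ELD(X,Q)$.

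The main obstacle is LCSS (and the reward-style SWALE), where $trans(r)=1-r/\min(n,m)$ is decreasing and $RAC$ is maximized rather than minimized, so the inequality reverses naively. I would handle this by working with the cost reformulation in Table~\ref{table_DEL_and_matching_cost}: match cost $0$ within $\varepsilon$, $1$ otherwise, and deletion cost $1$. In that form a smaller total cost corresponds to a larger common subsequence, and the transformation must be read as monotone in this cost. I would check that the tabulated trans combined with the cost table yields a nondecreasing map, and otherwise restate the LCSS case explicitly via the relation between the edit-style cost and the LCSS length. The TWED match cost also involves $x_{i-1},q_{j-1}$, but it is still a fixed nonnegative number per pair $(i,j)$, so no extra difficulty arises there.
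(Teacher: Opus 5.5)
Your core argument is the paper's proof: take the operation set from Corollary~\ref{corollary_obeservation}, send matches to cross edges and deletions to self-loops, note that the result is an edge cover of weight at most $RAC(n,m)$, then use minimality of $S^*$ and monotonicity of $trans$. This is sound for DTW, ERP, MSM, TWED, EDR and SWALE. SWALE is tabulated as a min-cost recurrence with identity $trans$, so your ``reward-style'' worry does not arise there. Your explicit check that every matched pair satisfies $|i-j|\le\omega$ is a real hypothesis the paper leaves implicit. $ELD$ must be computed under the same band that defines $E_1$; otherwise an optimal match outside the band has no corresponding edge and the inequality can fail.

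The LCSS obstacle you flag is real, but the check you propose will come out negative, and no argument can close it. With the weights of Table~\ref{table_DEL_and_matching_cost} the edge-cover value is a cost, while $trans(r)=1-r/\min(n,m)$ is decreasing. Concretely, take $Q=X$ with $n=m$. The cover $\{(u_i,v_i)\}$ has weight $0$, so the right-hand side is $trans(0)=1$, whereas $ELD(X,X)=1-n/n=0$. The same happens with the raw costs of Table~\ref{tab:elastic-measures-block}, where every self-loop has weight $0$. So the statement is false for LCSS as written. The paper's proof misses this because it invokes only ``strict monotonicity'' of $trans$, not its direction.

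Your fallback is the right repair, but it changes the statement rather than proving it. An LCSS alignment with $L$ in-band $\varepsilon$-matches yields a cover made of those $L$ zero-weight cross edges plus unit self-loops on the remaining $n+m-2L$ vertices. Hence $\sum_{e\in S^*}w_e\le n+m-2L$, and therefore $ELD(X,Q)\ge 1-\frac{n+m-\sum_{e\in S^*}w_e}{2\min(n,m)}$, which is nondecreasing in the cover weight.
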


\begin{proof}
According to the definition of \(G^{(X,Q)}\),
a matching operation that matches $x_i$ and $q_j$ corresponds to the cross edge $(u_i,v_j) \in E_1$.
A deletion operation on element $x_i$ (or $q_j$) corresponds to the self-loop edge $(u_i,u_i) \in E_2$ (or $(v_j,v_j)$).

Thus, any valid operation sequence \(\mathbb{S}\) that transforms \(X\) into \(Q\) through match and delete operations maps to an edge set \(S \subseteq E_1 \cup E_2\), where each operation is represented by at most one edge.   
Moreover, the cost of any operation in $\mathbb{S}$ is always greater than or equal to the weight of its corresponding edge (if it exists) in \(G^{(X,Q)}\).  
Let $w_e$ be the non-negative weight of the edge $e$:

\begin{equation}
    RAC(n,m) \geq \sum_{e\in S}w_e.
\end{equation}

By Corollary~\ref{corollary_obeservation}, every element in \(X\) and \(Q\) must be matched or deleted in any valid alignment. Consequently, the corresponding edge set \(S\) must cover all vertices in \(U \cup V\), i.e., \(S\) is an edge cover of \(G\).  
Let $S^*$ be a minimum weight edge cover of $G^{(X,Q)}$, by definition,
\begin{equation}
    RAC(n,m) \geq \sum_{e\in S} w_e \geq   \sum_{e\in S^*}w_e.
\end{equation}
The proof is then completed by noting that the post-processing transformation \(\mathrm{trans}(\cdot)\) is strictly monotonic.
\end{proof}

Therefore, a valid lower bound on \(ELD(X, Q)\) is given by the cost of the minimum-weight edge cover on \(G^{(X,Q)}\).

To illustrate this correspondence concretely, consider the ERP distance with \(g = 1\)\footnote{\(g\) is the reference value, which is a parameter of ERP.}, applied to
\(X = (0, 5, 3, 7, 4)\) and \(Q = (6, 3, 6, 5, 2)\).
Table~\ref{table_example} lists an \emph{optimal} (minimum-cost) ERP alignment sequence and its associated costs.
Each operation maps directly to an edge in the induced graph: matches correspond to edges in \(E_1\), and deletions to self-loops in \(E_2\).
Critically, the resulting edge set covers all vertices in \(G^{(X,Q)}\).

Figure~\ref{fig:erp_overview}(b) visualizes this mapping, showing only the edges included in the edge cover.
This example demonstrates how the dissimilarity between \(X\) and \(Q\) under ERP can be interpreted as an edge cover problem on the induced graph.

\begin{table}[htbp]
\centering
\caption{Alignment Steps and Corresponding Edges for ERP\((X, Q)\)}
\label{table_example}
\scalebox{0.7}{
\begin{tabular}{|c|c|c|c|}
\hline
\textbf{Step} & \textbf{Operation} & \textbf{Cost} & \textbf{Corresponding Edge in \(G\)} \\ \hline
1 & delete \(x_1=0\) & \(D(x_1)=(0-g)^2=1\) & \((u_1,u_1) \in E_2\) \\ \hline
2 & match \(x_2=5\) with \(q_1=6\) & \(M(x_2,q_1)=(5-6)^2=1\) & \((u_2,v_1) \in E_1\) \\ \hline
3 & match \(x_3=3\) with \(q_2=3\) & \(M(x_3,q_2)=(3-3)^2=0\) & \((u_3,v_2) \in E_1\) \\ \hline
4 & match \(x_4=7\) with \(q_3=6\) & \(M(x_4,q_3)=(7-6)^2=1\) & \((u_4,v_3) \in E_1\) \\ \hline
5 & match \(x_5=4\) with \(q_4=5\) & \(M(x_5,q_4)=(4-5)^2=1\) & \((u_5,v_4) \in E_1\) \\ \hline
6 & delete \(q_5=2\) & \(D(q_5)=(2-g)^2=1\) & \((v_5,v_5) \in E_2\) \\ \hline
\end{tabular}
}
\end{table}

\subsection{A Relaxed Sufficient Condition for Valid Lower Bound}
\label{section_releasing_the_problem}

Computing the exact minimum-weight edge cover of the induced graph \(G^{(X,Q)}\) yields a  tight lower bound but is computationally expensive.  
Instead of solving for the cheapest edge cover on \(G^{(X,Q)}\), Theorem~\ref{the_graph} provides a relaxed and tractable sufficient condition stated as Equation~\eqref{equ_condition_DLP}:
We  assign each vertex \(i \in \{ U \cup V\}\) a non-negative “price tag” \(d_i \geq 0\) such that:
(i) \(d_i\) does not exceed the cost of deleting \(i\) (i.e., \(d_i \leq w{(i,i)}\)), and  
(ii) For every possible match \((u,v)\), the combined price satisfies \(d_u + d_v \leq w{(u,v)}\).

Intuitively, under these constraints, any edge cover must incur a total cost of at least \(\sum_i d_i\), since each vertex must be “paid for” by at least its assigned price \(d_i\). Thus, \(\sum_i d_i\) is a certified lower bound on the true alignment cost.
Formally described as follows:

\begin{theorem}
\label{the_graph}
Given time series \(X \in \mathbb{R}^n\) and \(Q \in \mathbb{R}^m\) with induced graph \(G^{(X,Q)} = (U \cup V, E_1 \cup E_2)\), let \(\{d_i\}_{i \in U \cup V}\) be a set of real numbers satisfying:
\begin{equation}
\label{equ_condition_DLP}
\begin{aligned}
d_u + d_v &\leq w{(u,v)} && \forall (u, v) \in E_1, \\
0 \leq d_i &\leq w{(i,i)} && \forall i \in U \cup V.
\end{aligned}
\end{equation}
Then,
\begin{equation}
ELD(X, Q) \geq trans\left(\sum_{i \in U \cup V} d_i\right).
\end{equation}
\end{theorem}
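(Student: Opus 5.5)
The plan is to reduce Theorem~\ref{the_graph} to Theorem~\ref{theroem_min_edge_cover}. By that theorem, $ELD(X,Q) \geq trans(\sum_{e\in S^*} w_e)$ for a minimum-weight edge cover $S^*$ of $G^{(X,Q)}$. Since $trans(\cdot)$ is monotone, it suffices to prove the purely combinatorial inequality
\begin{equation*}
\sum_{e\in S} w_e \;\geq\; \sum_{i\in U\cup V} d_i
\end{equation*}
for \emph{every} edge cover $S$ of $G^{(X,Q)}$, and in particular for $S^*$. This is a weak LP duality statement: the vertex weights $d_i$ are a feasible solution of the dual of the edge-cover LP.

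For the combinatorial inequality I will use a charging argument. Fix an edge cover $S$. Because $S$ covers every vertex, I can choose, for each vertex $i$, one edge $e(i)\in S$ incident to $i$, and charge $d_i$ to $e(i)$. Each edge then receives charge from at most its endpoints. A self-loop $(i,i)$ receives at most $d_i \le w_{(i,i)}$. A cross edge $(u,v)\in E_1$ receives at most $d_u+d_v \le w_{(u,v)}$, using $d_u,d_v\ge 0$ in case only one endpoint charges it. Summing over $S$ gives $\sum_i d_i \le \sum_{e\in S}\sum_{i:e(i)=e} d_i \le \sum_{e\in S} w_e$. Nonnegativity of the weights makes uncharged edges harmless.

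Equivalently, one can write $\sum_{e\in S} w_e \ge \sum_{e\in S}\sum_{i\in e} d_i \ge \sum_i d_i$. Here the last step uses that every vertex appears in at least one edge of $S$ and that $d_i\ge 0$. The constraint $d_i\ge0$ is needed precisely so that vertices covered multiple times only overcount.

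Chaining the inequalities gives $RAC(n,m)\ge \sum_{e\in S^*}w_e \ge \sum_i d_i$. Applying $trans$ then yields the claim.

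The main obstacle is the direction of $trans$. For LCSS, $trans(r)=1-r/\min(n,m)$ is decreasing, and $RAC$ there accumulates similarity rather than cost, so the inequality must be read in the matching orientation. I will handle it exactly as in Theorem~\ref{theroem_min_edge_cover}, treating $trans$ as order-compatible with the cost formulation; the paper's Table~\ref{table_DEL_and_matching_cost} recasts LCSS with mismatch cost $1$ for this purpose. I will remark that the argument is inherited from that theorem rather than re-derived. Everything else is routine.
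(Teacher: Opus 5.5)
Your proposal is correct and follows the paper's proof in substance. Both reduce to Theorem~\ref{theroem_min_edge_cover} and then show that a dual-feasible $\{d_i\}$ satisfies $\sum_i d_i\le\sum_{e\in S^*}w_e$. The paper does this by writing the edge-cover IP, relaxing it to an LP and citing weak duality ($Z_{IP}\ge Z_{LP}\ge D_{LP}\ge\sum_i d_i$); your charging argument is that same weak-duality computation done directly on integral covers, so the relaxation step is not needed.

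Your caveat about $trans$ is justified and applies equally to the paper, whose appeal to strict monotonicity gives the stated direction only when $trans$ is increasing. With the literal LCSS transform $1-r/\min(n,m)$, the always-feasible choice $d_i\equiv 0$ would assert $ELD(X,Q)\ge 1$. So the LCSS case must be read with an increasing transform of the recast cost, in your argument and the paper's alike.
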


\begin{proof}
Let \(z_e \in \{0,1\}\) indicate whether the edge \(e \in E_1 \cup E_2\) is selected in the edge cover. 
The weighted minimum edge cover problem on the induced graph $G^{(X,Q)}$ can be formulated as the following integer program:

\begin{equation}
\label{equ_IP}
    \begin{aligned}
        Z_{IP}(G)=\min \quad & \sum_{e \in E_1\cup E_2} w_e z_e \\
        \text{s.t.} \quad & \sum_{e \in  E_1\cup E_2 }   \mathbf{1}_{\{ i \in e\}} \times z_e \geq 1, \quad \forall i \in U \cup V \\
        & z_e \in \{0,1\}, \quad \forall e \in E_1\cup E_2,  
    \end{aligned}    
\end{equation}
where  $\mathbf{1}_{\{ i \in e\}} =1$ \textbf{if and only if} the edge $e$ connects the vertex $i$, i.e. ,the vertex \(i\) is incident to the edge \(e\).

Relaxing the integrality constraint on $z_e$ yields the following linear program problem \(Z_{\mathrm{LP}}(G)\):

\begin{equation}
\label{equ_LP}
    \begin{aligned}
        Z_{LP}(G)=\min \quad & \sum_{e \in E_1\cup E_2} w_e z_e \\
        \text{s.t.} \quad & \sum_{e \in E_1\cup E_2} \mathbf{1}_{\{ i \in e\}} \times z_e \geq 1, \quad \forall i \in U \cup V \\
        & z_e \geq 0, \quad \forall e \in E_1\cup E_2.  
    \end{aligned}    
\end{equation}

Obviously, the feasible domain of variables $\{z_e\}_{e\in E_1 \cup E_2}$ in 
Equation (\ref{equ_LP}) strictly contains that of Equation (\ref{equ_IP}).
Therefore, the optimal solution in equation (\ref{equ_IP}) must be a feasible solution for equation (\ref{equ_LP}), which means, 
\begin{equation}
   \label{equ_realtion1}
   Z_{LP}(G) \leq Z_{IP}(G).
\end{equation}

The dual of \eqref{equ_LP} is to maximize the linear combination of the dual variables $\{d_i\}_{i \in  U \cup V}$, which is,

\begin{equation}
\label{equ_DLP}
    \begin{aligned}
    D_{LP}(G)=\max \quad & \sum_{i \in U \cup V} d_i \\
    \text{s.t.} \quad & d_u + d_v \leq w{(u,v)}, \quad & \forall (u, v) \in E_{1} \\
    & d_i \leq w{(i,i)}, \quad &\forall i \in U \cup V  \\
    & d_i \geq 0, \quad &\forall i \in U \cup V
    \end{aligned}    
\end{equation}

By weak duality, \(D_{\mathrm{LP}}(G) \leq Z_{\mathrm{LP}}(G)\).  
Moreover, from Theorem~\ref{theroem_min_edge_cover}, we know \(RAC(X, Q) \geq Z_{\mathrm{IP}}(G)\). 
Combining inequalities  (\ref{equ_minimum_edge_cover}) and (\ref{equ_realtion1}) yields:

\begin{equation}
     RAC(n,m)\geq   \sum_{e \in S^*} w_e = Z_{IP}(G) \geq Z_{LP}(G) \geq D_{LP}(G).
\end{equation}

Any assignment \(\{d_i\}\) satisfying \eqref{equ_condition_DLP} is a feasible solution for  
the dual linear programming problem described in equation (\ref{equ_DLP}).
It follows that,
\begin{equation}
    \sum_{i\in U \cup V} d_i \leq D_{LP}(G) \leq RAC(n,m).
\end{equation}

The proof is then completed by noting that \(\mathrm{trans}(\cdot)\) is strictly monotonic.
\end{proof}

\section{The Proposed Instantiation under the Paradigm: \textit{BGLB}}
\label{section_BGLB}

This section presents \textit{BGLB} (Bipartite Graph-based Lower Bound), a  instantiation of the proposed Bipartite Graph Edge-Cover Paradigm.  
Building upon the bipartite graph formulation introduced earlier, \textit{BGLB} constructs an efficient dual-feasible vertex weighting that satisfies the constraints derived from the linear programming relaxation of the minimum-weight edge cover problem.

\textit{BGLB} decomposes into three components: boundary, base, and augmentation, each capturing complementary aspects of sequence dissimilarity while ensuring full coverage of all elements in both time series in \textbf{\(O(n + m)\)} time. The resulting bound is provably correct and tighter than the state-of-the-art \textit{GLB} across all evaluated elastic measures, making \textit{BGLB} highly practical for large-scale similarity search.

\subsection{Formalization of \textit{BGLB}}
\label{4-1}  
We formally define \textit{BGLB} as an explicit construction of a dual-feasible vertex weighting on the induced graph \(G^{(X,Q)}\), satisfying the constraints in Theorem~\ref{the_graph}. This guaranties its validity as a lower bound while enabling linear-time computation.

For a time series \(X \in \mathbb{R}^n\), its upper and lower envelopes are defined as  
\[
\mathbb{U}^X_i = \max_{|k - i| \leq \omega} x_k, \quad
\mathbb{L}^X_i = \min_{|k - i| \leq \omega} x_k,
\]
for \(i = 1, \dots, n\), where \(\omega\) is the warping window radius and out-of-bound indices are clamped to the nearest valid position (e.g., \(x_k = x_1\) if \(k < 1\)).  
The envelopes \(\mathbb{U}^Q\) and \(\mathbb{L}^Q\) for \(Q\) are defined analogously.  
An illustration is shown in Figure~\ref{fig_envelope}.

\begin{figure}[ht]
	\centering
        \includegraphics[width=0.8\linewidth]{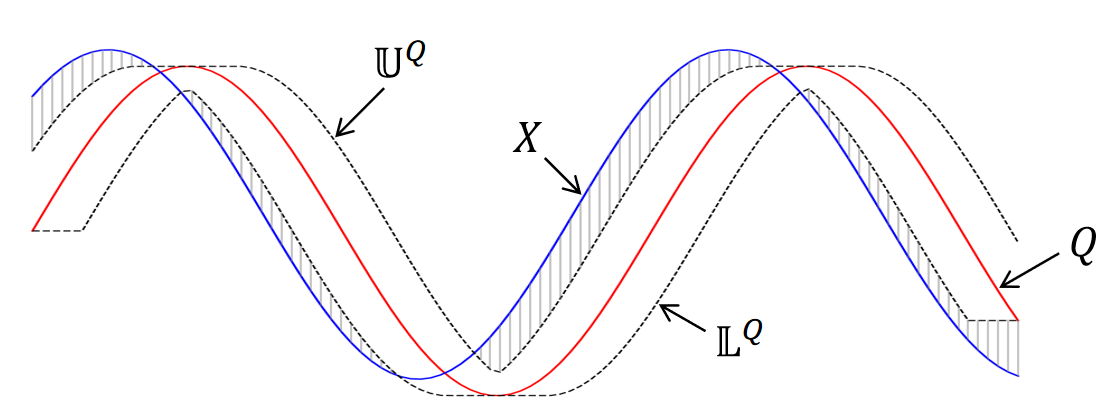}
        \caption{Illustration of the upper and lower envelopes.}
        \Description{Illustration of the upper and lower envelopes.}
        \label{fig_envelope}
\end{figure}

\begin{definition}(\textit{BGLB})
\label{def_BGLB}
For two sequences $X=(x_1,\ldots,x_n)\in\mathbb{R}^n$ and $Q=(q_1,\ldots,q_m)\in\mathbb{R}^m$, \textit{BGLB} is defined as:
\begin{equation}
\label{eq:BGLB_dir}
BGLB(X,Q)= trans\left( \Delta_{bdy}(X,Q)+\Delta_{base}(X,Q)+\Delta_{aug}(X,Q) \right),
\end{equation}
where $\Delta_{bdy}(X,Q)$, $\Delta_{base}(X,Q)$ and $\Delta_{aug}(X,Q)$ are the \textbf{Boundary Assignment term}, \textbf{Base Assignment term} and \textbf{Augmentation Assignment term}, respectively.

\smallskip
\noindent\textbf{Boundary Assignment term.}
$\Delta_{bdy}(X,Q)$ is the \emph{same boundary contribution as in \textit{GLB}}, which is measure-specific (it captures the mandatory corner constraints induced by the DP alignment) and is computed exactly in the same way as \textit{GLB}.

\smallskip
\noindent\textbf{Base Assignment term} is defined as:
\begin{equation}
\label{eq:BGLB_base}
\Delta_{base}(X,Q)=\sum_{i=2}^{n-1} d_{u_i},
\end{equation}
where

\[
d_{u_i} =
\begin{cases}
\delta(x_i,\mathbb{U}_i^Q), & x_i>\mathbb{U}_i^Q,\\
\delta(x_i,\mathbb{L}_i^Q), & x_i<\mathbb{L}_i^Q,\\
0, & \text{otherwise},
\end{cases}
\quad i=1,\ldots,n,
\]
and
\begin{equation}
\label{eq:def_delta}
\delta(x_i,b)=\min\{M(x_i,b),\,D(x_i)\}, \forall x_i, b \in \mathbb{R}.
\end{equation}

\smallskip
\noindent\textbf{Augmentation Assignment term} is defined as:

\begin{equation}
\label{eq:BGLB_aug}
\Delta_{aug}(X,Q)=\sum_{j=2}^{m-1}d_{v_j},
\end{equation}
where
\begin{equation}
\label{eq:BGLB_aug}
d_{v_j}=\left\{
\begin{aligned}
&\Gamma(q_j,\mathbb{U}_j^X),&q_j>\mathbb{U}_j^X\\
&\Gamma(q_j,\mathbb{L}_j^X),&q_j<\mathbb{L}_j^X\\
&0,&\text{otherwise}
\end{aligned}
\right.
\end{equation}
and 
\begin{equation}
\label{eq:def_gamma}
\Gamma(q_j,b)=\min\left\{\max\{M(q_j,b)-\mathbb{U}^{\Delta}_{j},0\},\,D(q_j)\right\}, \forall q_j,b \in \mathbb{R},
\end{equation}
\[
\mathbb{U}^{\Delta}_j = \max\{\, d_{u_i}\mid |i-j|\le \omega \,\}, \forall j\in\{1,\ldots,m\}.
\]
where out-of-range indices are ignored (or clamped) consistently with the envelope construction.
\end{definition}

Intuitively, \textbf{Boundary Assignment term} evaluates dissimilarity based on the first and last few elements of the two sequences, capturing potential mismatches at the temporal boundaries.
\textbf{Base Assignment term} computes the minimum cost of matching elements of \(Q\) to \(X\), leveraging envelope-based bounds.
\textbf{Augmentation Assignment term} computes the symmetric cost of matching elements of \(X\) to \(Q\).

\subsection{Correctness of \textit{BGLB}}
\label{section_proof_of_BGLB}

The correctness of \textit{BGLB} is established in Theorem~\ref{the_correct_lb}.
\begin{theorem}
\label{the_correct_lb}
Given any time series $X \in \mathbb{R}^n$  and $Q \in \mathbb{R}^m$, the following inequality holds:

\begin{equation}
    ELD(X,Q) \geq BGLB(X,Q).
\end{equation}
\end{theorem}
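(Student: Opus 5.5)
The plan is to invoke Theorem~\ref{the_graph}: exhibit the vertex weights $d_{u_i}$, $d_{v_j}$ from Definition~\ref{def_BGLB}, together with the boundary weights inherited from \textit{GLB}, as a feasible solution of the dual program \eqref{equ_condition_DLP}. Once feasibility holds, $\sum_i d_i = \Delta_{bdy}+\Delta_{base}+\Delta_{aug}$ and the theorem gives $ELD(X,Q)\ge trans(\cdot)=BGLB(X,Q)$. The boundary term touches only the corner vertices $u_1,u_n,v_1,v_m$, while the base and augmentation sums run over interior indices $2,\dots,n-1$ and $2,\dots,m-1$. I would therefore take the boundary contribution as a feasible weighting on the corner vertices, citing \textit{GLB}'s correctness argument. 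Interior edges incident to corners (for example $(u_1,v_j)$) must also be checked; I would handle them by assigning the GLB boundary weight in a way that already respects the incident match costs, as GLB does.

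\textbf{Step 1 (nonnegativity and self-loops).} By definition $d_{u_i}=\delta(\cdot)=\min\{M,D(x_i)\}$ or $0$, so $0\le d_{u_i}\le D(x_i)=w(u_i,u_i)$. Similarly $d_{v_j}=\Gamma(\cdot)\le D(q_j)$, and $\Gamma\ge0$ because of the $\max\{\cdot,0\}$ and $D\ge0$.

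\textbf{Step 2 (cross edges).} For $(u_i,v_j)\in E_1$ with $|i-j|\le\omega$, I need $d_{u_i}+d_{v_j}\le M(x_i,q_j)$. There are two sub-claims.

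(a) $d_{u_i}\le M(x_i,q_j)$. If $x_i>\mathbb U^Q_i$, then $q_j\le \mathbb U^Q_i<x_i$ because $j$ lies in the window of $i$. Monotonicity of $M$ in $|x-b|$ then gives $M(x_i,\mathbb U^Q_i)\le M(x_i,q_j)$, and the case $x_i<\mathbb L^Q_i$ is symmetric.

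(b) $d_{v_j}\le \max\{M(x_i,q_j)-\mathbb U^\Delta_j,0\}$. If $q_j>\mathbb U^X_j\ge x_i$, the same monotonicity gives $M(q_j,\mathbb U^X_j)\le M(q_j,x_i)$. Since $\mathbb U^\Delta_j\ge d_{u_i}$ for this $i$, we get $d_{v_j}\le \max\{M(x_i,q_j)-d_{u_i},0\}$. Combining with (a), $M(x_i,q_j)-d_{u_i}\ge0$, so $d_{u_i}+d_{v_j}\le M(x_i,q_j)$. If $d_{v_j}=0$, (a) suffices.

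The monotonicity property, namely that $M(x,b)$ is nondecreasing in $|x-b|$, holds by inspection of Table~\ref{table_DEL_and_matching_cost} for squared and absolute differences and for the threshold-based measures (LCSS/EDR/SWALE, with $r\le p$). Symmetry $M(x,q)=M(q,x)$ is used in (b).

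\textbf{Main obstacle.} The main difficulty is the measures whose costs are not pointwise, chiefly TWED, where $M$ depends on $x_{i-1},q_{j-1}$ and timestamps. There I would replace $M$ by the pointwise lower bound $|x_i-q_j|$ plus nonnegative terms, and treat the graph weights as that relaxed cost, consistent with Theorem~\ref{theroem_min_edge_cover} only requiring edge weights not to exceed operation costs. I would also check that the clamping conventions for out-of-range window indices do not break the inclusion $q_j\in[\mathbb L^Q_i,\mathbb U^Q_i]$ and $i\in\{i':|i'-j|\le\omega\}$. For DTW, where $D(x_i)$ is itself a window minimum, I would verify that $\delta\le D$ still holds.
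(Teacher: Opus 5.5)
Your Steps 1 and 2 match the paper's dual-feasibility check. The self-loop bounds come from the $\min$ with $D$ inside $\delta$ and $\Gamma$. The cross-edge bound combines the clip/monotonicity fact with $d_{u_i}\le\mathbb{U}^{\Delta}_j$. The paper splits on $d_{v_j}=0$ versus $d_{v_j}>0$, while you fold both cases into $\max\{M(x_i,q_j)-d_{u_i},0\}$ via (a); the two are equivalent.

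\textbf{The gap is your treatment of $\Delta_{bdy}$.} You propose to realize it as dual-feasible weights on $u_1,u_n,v_1,v_m$ and then apply Theorem~\ref{the_graph} once. This cannot work in general, because that LP relaxes edge cover and does not encode that the path must start at cell $(1,1)$ and end at $(n,m)$. Take DTW with $\omega=1$, $X=(0,1,0)$, $Q=(1,0,0)$:
\begin{itemize}
\item The edges $(u_1,v_2)$, $(u_2,v_1)$, $(u_3,v_3)$ all have weight $0$ and together form an edge cover.
\item By weak duality, every dual-feasible weighting therefore has total $0$.
\item Yet the corner term is $(x_1-q_1)^2+(x_3-q_3)^2=1$, and indeed $RAC(3,3)=1$.
\end{itemize}
So no corner weighting that ``respects the incident match costs'' can carry the boundary term. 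Citing GLB does not help either, since GLB's correctness proof is a path-segment argument, not a vertex weighting. The paper's own proof simply appends $\Delta_{bdy}$ after invoking Theorem~\ref{the_graph}, so you will not find this step there.

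\textbf{The repair is to keep $\Delta_{bdy}$ outside the LP and decompose the optimal path.}
\begin{enumerate}
\item The mandatory cells $(1,1)$ and $(n,m)$ pay at least $\Delta_{bdy}$; this is what GLB's boundary argument establishes.
\item Every interior row $i\in\{2,\dots,n-1\}$ is first entered by a diagonal or vertical move. Every interior column $j\in\{2,\dots,m-1\}$ is first entered by a diagonal or horizontal move. These moves occur at cells other than the two corners.
\item Hence the remaining cells cover all interior vertices, and each such cell costs at least the weights of the interior vertices it covers.
\item Run the weak-duality step on this residual cover, with weight $0$ on $u_1,u_n,v_1,v_m$ and your weights on the interior vertices.
\item An edge joining an interior vertex to a corner vertex then only requires $d_{u_i}\le M(x_i,q_j)$ or $d_{v_j}\le M(x_i,q_j)$. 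Your (a) and (b) already give this, since $\Gamma(q_j,b_j)\le M(q_j,b_j)\le M(q_j,x_i)$ for every $i$ in the window of $j$. Keeping the formula values of $d_{u_1},d_{u_n}$ inside $\mathbb{U}^{\Delta}_j$ only lowers $d_{v_j}$, so it does no harm.
\item This yields $RAC(n,m)\ge\Delta_{bdy}+\Delta_{base}+\Delta_{aug}$, and monotonicity of $trans$ finishes the proof.
\end{enumerate}
Your extra care on TWED (relaxing $M$ to a pointwise lower bound) and on the clamping conventions is sound and goes beyond what the paper does.
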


\begin{proof}
We first establish the inequality for Eq.~(\ref{eq:BGLB_dir}), and then obtain the symmetric bound by swapping the arguments.
Since all elastic measures considered in this paper are symmetric, we have $ELD(X,Q)=ELD(Q,X)$.

\smallskip
\noindent\textbf{Dual-feasible assignment.}
Recall from Definition~\ref{def_BGLB} that $d_{u_i}$ and $d_{v_j}$ are defined by the boundary-clipped values w.r.t.\ the envelopes
$\{\mathbb{L}_i^Q,\mathbb{U}_i^Q\}$ and $\{\mathbb{L}_j^X,\mathbb{U}_j^X\}$, together with the envelope
$\mathbb{U}^{\Delta}_j=\max\{y_{u_t}\mid |t-j|\le \omega\}$ used in $\Gamma(\cdot,\cdot)$.
By Definition~\ref{def_BGLB}, we also have
$\sum_i d_{u_i}=\Delta_{base}(X,Q)$ and $\sum_j d_{v_j}=\Delta_{aug}(X,Q)$.

We show that $\{d_{u_i}\}\cup\{d_{v_j}\}$ is dual-feasible for the dual LP in Eq.~(\ref{equ_condition_DLP}).

\smallskip
\noindent\textbf{Self-loop constraints ($E_2$).}
For any $u_i$, Definition~\ref{def_BGLB} gives $d_{u_i}\in\{0,\delta(x_i,\mathbb{U}_i^Q),\delta(x_i,\mathbb{L}_i^Q)\}$.
Since $\delta(a,b)=\min\{M(a,b),D(a)\}$ (Eq.~(\ref{eq:def_delta})), we have $0\le d_{u_i}\le D(x_i)=w(u_i,u_i)$.
For any $v_j$, by Eq.~(\ref{eq:def_gamma}) we have $0\le \Gamma(q_j,b)\le D(q_j)$, hence
$0\le d_{v_j}\le D(q_j)=w(v_j,v_j)$.
Therefore all constraints associated with $E_2$ hold.

\smallskip
\noindent\textbf{Match-edge constraints ($E_1$).}
Consider any $(u_i,v_j)\in E_1$. By definition, $|i-j|\le \omega$, which implies
$q_j\in[\mathbb{L}_i^Q,\mathbb{U}_i^Q]$ and $x_i\in[\mathbb{L}_j^X,\mathbb{U}_j^X]$.

We use the standard projection fact:
for any interval $[L,U]$ and any $a$, letting $b=\mathrm{clip}(a,[L,U])$ yields
$M(a,b)\le M(a,z)$ for all $z\in[L,U]$.

\medskip
\noindent\textbf{Case 1: $d_{v_j}=0$.}
Then $d_{u_i}+d_{v_j}=d_{u_i}$.
When $d_{u_i}=0$ the constraint is trivial.
Otherwise, $d_{u_i}=\delta(x_i,b_i)$ for $b_i\in\{\mathbb{L}_i^Q,\mathbb{U}_i^Q\}$, which is exactly $\mathrm{clip}(x_i,[\mathbb{L}_i^Q,\mathbb{U}_i^Q])$.
Thus
\[
d_{u_i}\le M(x_i,b_i)\le M(x_i,q_j)=w(u_i,v_j).
\]

\medskip
\noindent\textbf{Case 2: $d_{v_j}>0$.}
By Definition~\ref{def_BGLB} and the envelope definition of $\mathbb{U}^{\Delta}_j$, we have $d_{u_i}\le \mathbb{U}^{\Delta}_j$.
Moreover, $d_{v_j}>0$ implies that the inner $\max\{\cdot,0\}$ in Eq.~(\ref{eq:def_gamma}) is active, hence
\[
d_{v_j}
= \Gamma(q_j,b_j)
\le M(q_j,b_j)-\mathbb{U}^{\Delta}_j,
\]
where $b_j\in\{\mathbb{L}_j^X,\mathbb{U}_j^X\}$ equals $\mathrm{clip}(q_j,[\mathbb{L}_j^X,\mathbb{U}_j^X])$.
Therefore,
\[
\begin{aligned}
d_{u_i}+d_{v_j}
&\le \mathbb{U}^{\Delta}_j+\bigl(M(q_j,b_j)-\mathbb{U}^{\Delta}_j\bigr) \\
&= M(q_j,b_j) \\
&\le M(q_j,x_i)=M(x_i,q_j)=w(u_i,v_j).
\end{aligned}
\]

where the last inequality uses $x_i\in[\mathbb{L}_j^X,\mathbb{U}_j^X]$ and the projection fact.

\smallskip
Combining the above, all constraints for edges in $E_1$ and $E_2$ are satisfied, so the assignment is dual-feasible.
By Theorem~\ref{the_graph},
\[
\begin{aligned}
ELD(X,Q) \;&\ge\; trans\left( \Delta_{bdy}(X,Q)+\sum_i d_{u_i}+\sum_j d_{v_j} \right)\\
          &=\; trans\left( \Delta_{bdy}(X,Q)+\Delta_{base}(X,Q)+\Delta_{aug}(X,Q) \right) \\
          &=\; BGLB(X,Q).
\end{aligned}
\]

Applying the same argument to $(Q,X)$ and using symmetry of $ELD$, we further obtain
$ELD(X,Q)\ge \max\{BGLB(X,Q),BGLB(Q,X)\}=BGLB^*(X,Q)$.
\end{proof}

\subsection{An Illustrative Example of \textit{BGLB}}

Consider two time series \(X\) and \(Q\) with warping window size \(\omega\).
Figure~\ref{fig_D_LP} illustrates how the proposed \textit{BGLB} differs from base-only bounds such as \textit{GLB}: the base pass assigns weights to vertices in \(U\), while the augmentation pass assigns additional non-negative weights to vertices in \(V\) to ensure full coverage under the dual constraints.

\begin{figure}[ht]
    \centering
    \subfigure[ The base-only sum is \(\sum_i d_{u_i}=5\). With augmentation on \(V\), \(BGLB(X,Q)=\sum_i d_{u_i}+\sum_j d_{v_j}=13\).]{
        \includegraphics[width=0.45\linewidth]{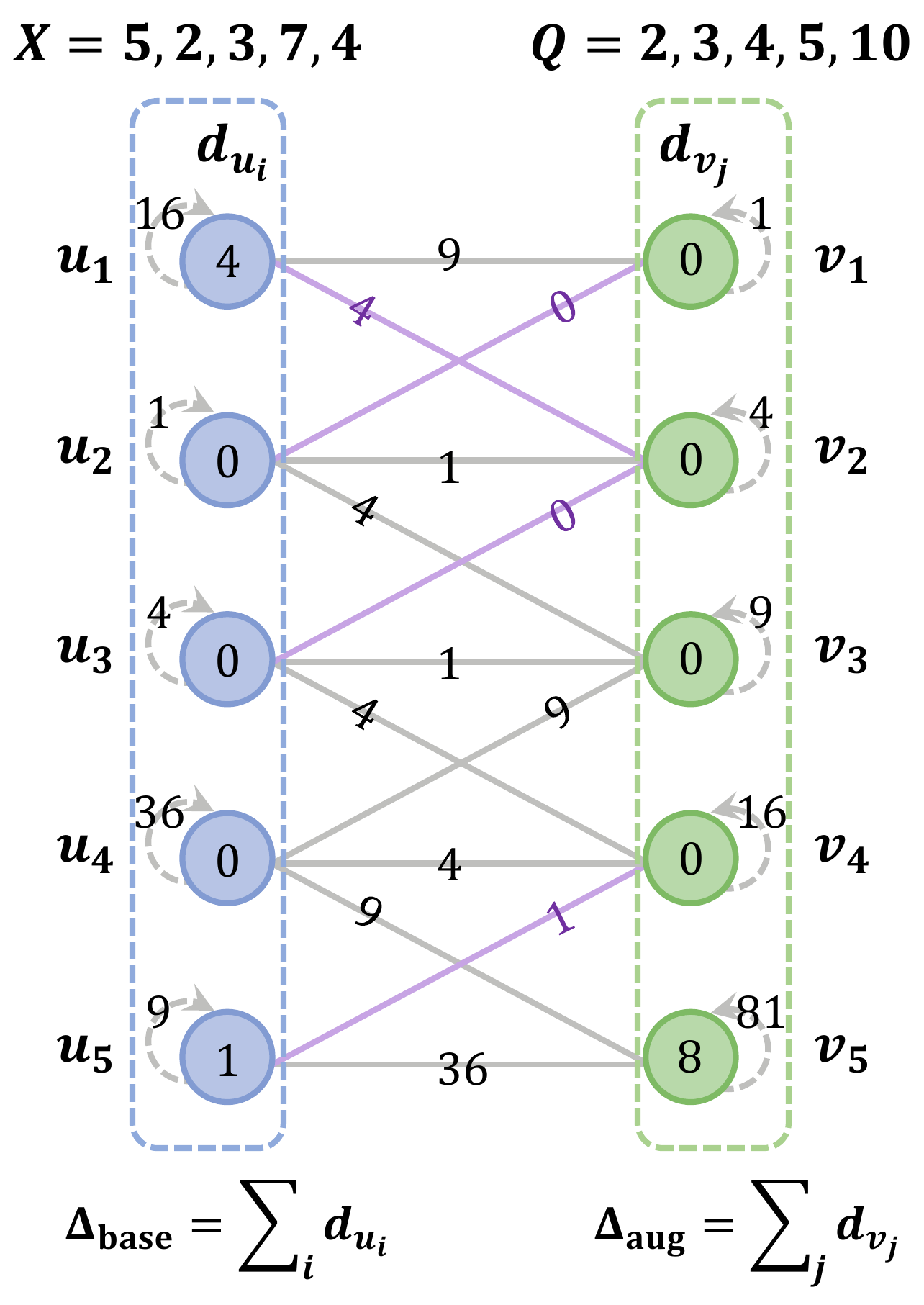}
    }
    \hfill
    \subfigure[The base-only sum is \(\sum_i d'_{u_i}=9\). With augmentation on \(V\), \(BGLB(Q,X)=\sum_i d'_{u_i}+\sum_j d'_{v_j}=13\).]{
        \includegraphics[width=0.45\linewidth]{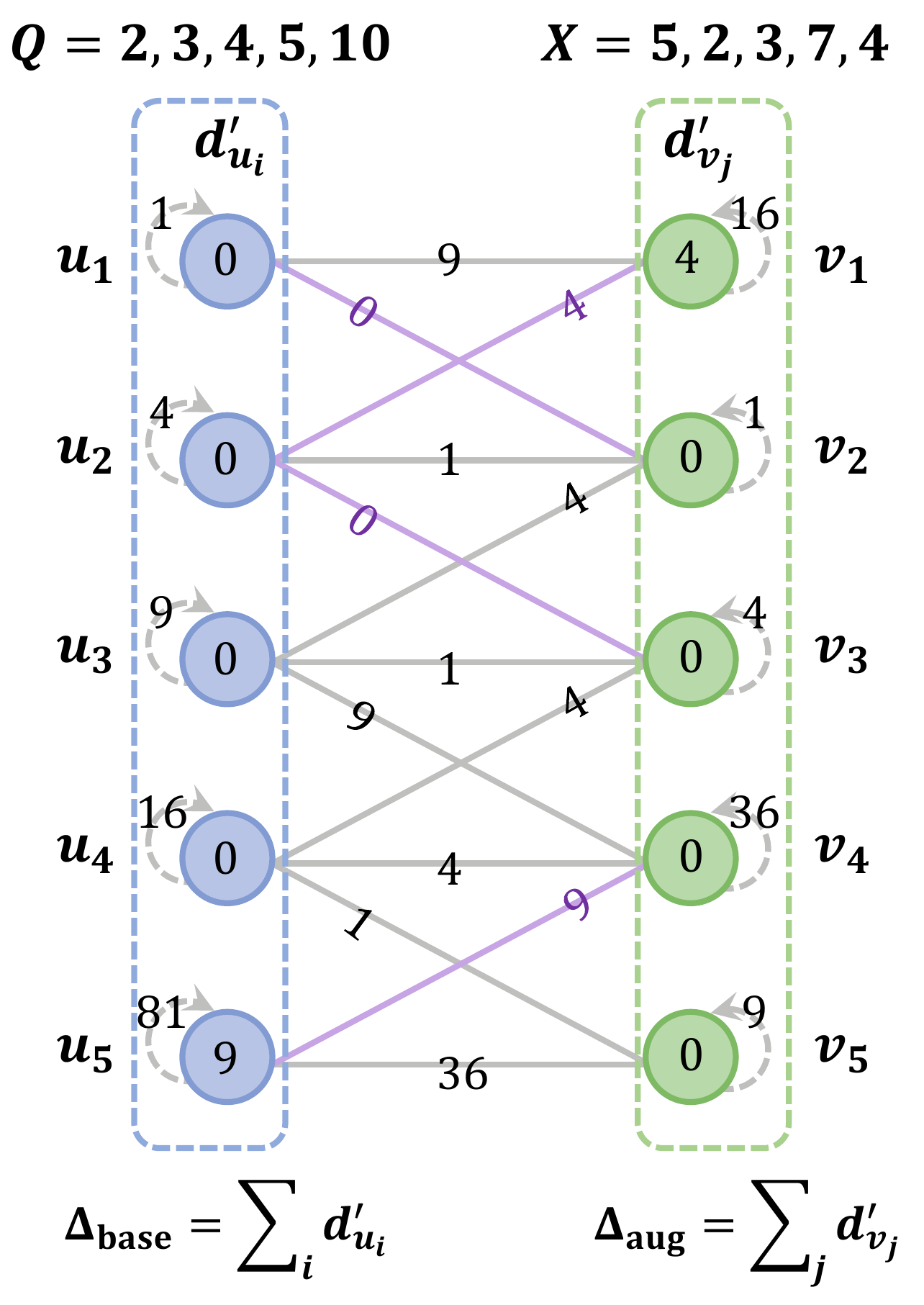}
    }
    \caption{An ERP example (\(\omega=1, g=1\)) illustrating why \textit{BGLB} is tighter than the lower bound obtained from the existing strip-based paradigm. For clarity, the boundary term is omitted in this visualization, focusing on the base and augmentation terms to show the symmetric case.}

    \label{fig_D_LP}
\end{figure}

We take \textit{ERP} distance as an example, considering \(X = (5, 2, 3, 7, 4)\) and \(Q = (2, 3, 4, 5, 10)\) with warping window \(\omega = 1\) and reference value \(g = 1\).
Table~\ref{tab:selected_series} reports the deletion costs and the vertex weights produced by the algorithm in both directions.
For \textit{GLB} (base-only), the bound is the sum on \(U\): \(GLB(X,Q)=\sum_i d_{u_i}=5\) and \(GLB(Q,X)=\sum_i d'_{u_i}=9\), so the symmetrized bound is \(GLB^{\ast}(X,Q)=\max\{5,9\}=9\).
For \textit{BGLB}, we additionally include the augmentation weights on \(V\): \(\sum_j d_{v_j}=8\), giving \(BGLB(X,Q)=5+8=13\).
Computing this augmentation pass for \((X,Q)\) also produces the base weights \(\{d'_{u_i}\}\) for \((Q,X)\); we then only need one additional augmentation pass to obtain \(\{d'_{v_j}\}\) and \(BGLB(Q,X)\).
In this example, \(BGLB(Q,X)=\sum_i d'_{u_i}+\sum_j d'_{v_j}=13\), hence \(BGLB^{\ast}(X,Q)=13\), which remains larger than \(GLB^{\ast}(X,Q)=9\).

\begin{table}[h]
    \centering
    \caption{Variables involved in an example using \textit{BGLB}}
    \scalebox{0.8}{ 
    \begin{tabular}{|c|c|c|c|c|c|c|c|c|c|c|}
        \hline
        Index & $X$ & $Q$ & $D(x_i)$ & $D(q_j)$ & $d_{u_i}$ & $d_{u_i}'$ & $d_{v_j}$ & $d_{v_j}'$ & $d_{u_i}+d_{v_j}$ & $d_{u_i}'+d_{v_j}'$ \\
        \hline
        1 & 5 & 2 & 16 & 1 & 4 & 0 & 0 & 4 & 4 & 4 \\
        \hline
        2 & 2 & 3 & 1 & 4 & 0 & 0 & 0 & 0 & 0 & 0 \\
        \hline
        3 & 3 & 4 & 4 & 9 & 0 & 0 & 0 & 0 & 0 & 0 \\
        \hline
        4 & 7 & 5 & 36 & 16 & 0 & 0 & 0 & 0 & 0 & 0 \\
        \hline
        5 & 4 & 10 & 9 & 81 & 1 & 9 & 8 & 0 & 9 & 9 \\
        \hline
    \end{tabular}
    }
    \label{tab:selected_series}
\end{table}



Importantly, this framework is agnostic to the specific elastic measure or the underlying matching/deletion cost definitions. 
It only requires that the measure conform to the basic structure of matching and deletion operations with non-negative costs. 
This generality makes \textit{BGLB} a flexible and extensible computational paradigm.

\subsubsection{The superiority of \textit{BGLB}.}
Typical lower bounds for elastic measures only consider $\Delta_{bdy}(X,Q)$ and $\Delta_{base}(X,Q)$ in \textit{BGLB}, without taking into account the additional lower bounds provided by the augmentation computation.
By comparing the mathematical forms of the previous work \textit{GLB}, the proposed \textit{BGLB} is better than \textit{GLB} in terms of lower bound effectiveness.

The symmetric version of $BGLB(X,Q)$, denoted as $BGLB^*$:
\begin{equation}
\label{eq:BGLB_sym}
\begin{split}
BGLB^*(X,Q) & = \max\!\left\{BGLB(X,Q),\,BGLB(Q,X)\right\}.
\end{split}
\end{equation}
Unless explicitly stated otherwise, throughout the paper (including all experiments) we use the symmetric form for both \textit{GLB} and \textit{BGLB}, which are $GLB^*$ and $BGLB^*$.

\begin{corollary}[\textit{BGLB} is always tighter than \textit{GLB}]
\label{cor:BGLB_ge_GLB}
Given any $X \in \mathbb{R}^n$ and $Q \in \mathbb{R}^m$, the following inequality holds for every elastic similarity measure:
\begin{equation}
    BGLB^*(X,Q) \geq GLB^*(X,Q).
\end{equation}
\end{corollary}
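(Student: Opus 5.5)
The plan is to compare the two bounds direction by direction, then pass to the symmetric forms. First I will establish the one-directional inequality $BGLB(X,Q)\ge GLB(X,Q)$ for every ordered pair, and then take maxima on both sides.

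For the directional step I will use the paper's statement that the boundary term $\Delta_{bdy}$ is computed exactly as in \textit{GLB}. I will also identify \textit{GLB}'s interior sum with $\Delta_{base}(X,Q)=\sum_{i=2}^{n-1} d_{u_i}$, where each $d_{u_i}$ is the envelope-clipped quantity $\delta(x_i,\mathrm{clip}(x_i,[\mathbb{L}^Q_i,\mathbb{U}^Q_i]))$. This gives
\[
GLB(X,Q)=\mathrm{trans}\bigl(\Delta_{bdy}(X,Q)+\Delta_{base}(X,Q)\bigr).
\]
\textit{BGLB} adds only $\Delta_{aug}(X,Q)=\sum_{j=2}^{m-1} d_{v_j}$. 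Each $d_{v_j}$ is either $0$ or $\Gamma(q_j,b)=\min\{\max\{M(q_j,b)-\mathbb{U}^\Delta_j,0\},D(q_j)\}$. Since $M\ge 0$, $D\ge 0$ and the inner $\max\{\cdot,0\}$ is nonnegative, every $d_{v_j}\ge 0$, so $\Delta_{aug}(X,Q)\ge 0$. This is the same nonnegativity already recorded in the self-loop part of the proof of Theorem~\ref{the_correct_lb}. Hence the argument of $\mathrm{trans}$ in \textit{BGLB} is at least the argument of $\mathrm{trans}$ in \textit{GLB}.

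Next I apply monotonicity of $\mathrm{trans}$. For $\sqrt{\cdot}$ and the identity it is increasing, so $BGLB(X,Q)\ge GLB(X,Q)$ follows immediately. The main obstacle is \textit{LCSS}. There $\mathrm{trans}(r)=1-r/\min(n,m)$ is decreasing in the accumulated similarity, whereas the bounds are built from costs in the distance orientation of Table~\ref{table_DEL_and_matching_cost}, with match cost $0/1$ and deletion cost $1$. I would handle this by the same convention the paper uses in Theorems~\ref{the_graph} and~\ref{the_correct_lb}: \textit{LCSS} is treated through its edit-style cost formulation, in which the lower bound is monotone nondecreasing in the summed dual weights. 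I would state this explicitly so that "larger sum implies larger bound" holds uniformly across all seven measures.

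Finally, I apply the same directional argument to $(Q,X)$ to get $BGLB(Q,X)\ge GLB(Q,X)$. Taking maxima gives
\[
BGLB^*(X,Q)=\max\{BGLB(X,Q),BGLB(Q,X)\}\ge \max\{GLB(X,Q),GLB(Q,X)\}=GLB^*(X,Q).
\]
The only substantive checks are that the base and boundary terms coincide exactly with \textit{GLB}'s components, which holds by construction, and the monotonicity convention for $\mathrm{trans}$.
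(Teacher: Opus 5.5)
Your proposal is correct and follows the paper's proof: both get $BGLB(X,Q)\ge GLB(X,Q)$ from the shared boundary and base terms together with $\Delta_{aug}(X,Q)\ge 0$, then take the maximum over both argument orders. The one difference is that you make explicit that $\mathrm{trans}$ must be nondecreasing, including your LCSS convention. The paper's proof skips this by writing $GLB$ and $BGLB$ without $\mathrm{trans}$, although the same monotonicity is already needed for Theorem~\ref{the_graph} to give a valid lower bound.
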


\begin{proof}
Write the \textit{GLB} as
\begin{equation}
\label{eq:GLB_dir}
GLB(X,Q)=\Delta_{bdy}(X,Q)+\Delta_{base}(X,Q).
\end{equation}
By Definition~\ref{def_BGLB}, we have
$BGLB(X,Q)=GLB(X,Q)+\Delta_{aug}(X,Q)\ge GLB(X,Q)$
because $\Delta_{aug}(X,Q)\ge 0$.
Finally, recall that for any lower bound $LB$, the symmetric version is defined as
\[
LB^{*}(X,Q)=\max\{LB(X,Q),\;LB(Q,X)\}.
\]
Since the inequality holds for both $LB(X,Q)$ and $LB(Q,X)$, it holds for their maximums.
Therefore, $BGLB^*(X,Q) \geq GLB^*(X,Q)$.
\end{proof}

\subsection{Computing Lower Bounds with \textit{BGLB}}

\subsubsection{Computing \textit{BGLB} with Early Stopping Methodology}

The goal of similarity search is to identify the time series pairs that are sufficiently similar to each other. 
Instead, it suffices to obtain a lower bound that is tight enough to enable effective pruning.
By definition, \textit{BGLB} is the sum of three components: \textbf{Boundary term},\textbf{Base matching cost} and \textbf{Augmentation matching cost}.
\textit{BGLB} also adheres to the early stopping principle.
The full bound is the sum of these components. 
In practice, we compute them incrementally and apply \emph{early stopping}: if the partial sum exceeds the current best-so-far (bsf) distance, we terminate immediately. Algorithm~\ref{ALGORITHM_BGLB} details this procedure.

\begin{algorithm}[ht]
    \caption{$BGLB(X, Q)$}
    \footnotesize
    \label{ALGORITHM_BGLB}
    \LinesNumbered
    \KwIn{\textbf{Time series $X\in \mathbb{R}^{n}$, $Q \in \mathbb{R}^{m}$; best-so-far value $\textit{bsf}$}}
    \KwOut{\textbf{A lower bound for the elastic distance}}
    $\Delta_{bdy} \leftarrow$ Dist computed from the boundary  \; 
    $\Delta_{base1} \leftarrow 0, \Delta_{aug1} \leftarrow 0, \Delta_{base2} \leftarrow 0, \Delta_{aug2} \leftarrow 0$\;

    \For{$i \leftarrow 2 $ \KwTo $n-1$}{
        \If{$x_i>\mathbb{U}_i^Q$}{
            $d_i \leftarrow \delta(x_i,\mathbb{U}_i^Q)$\; 
        }
        \ElseIf{$x_i<\mathbb{L}_i^Q$}{
            $d_i \leftarrow \delta(x_i,\mathbb{L}_i^Q)$\; 
        }
        \Else{ 
           $d_i \leftarrow 0$\;
        }
        $\Delta_{base1} \leftarrow \Delta_{base1}+ d_i$\;
        \If{$trans\left(\Delta_{bdy}+\Delta_{base1}\right) > \textit{bsf}$ }{
            \Return $\infty$ \tcp*{early stopping}
        }
    }
    $\mathbb{U}^{\Delta1}\leftarrow$ upper envelope (windowed maxima under $\omega$) induced by $\{d_i\}$\;
    
    \For{$j \leftarrow 2$ \KwTo $m-1$}{
        \If{$q_j>\mathbb{U}_j^X$}{
            $d_j' \leftarrow \delta(q_j,\mathbb{U}_j^X)$; $\gamma_j \leftarrow \Gamma(q_j,\mathbb{U}_j^X)$\;
        }
        \ElseIf{$q_j<\mathbb{L}_j^X$}{
            $d_j' \leftarrow \delta(q_j,\mathbb{L}_j^X)$; $\gamma_j \leftarrow \Gamma(q_j,\mathbb{L}_j^X)$\;
        }
        \Else{
           $d_j' \leftarrow 0$; $\gamma_j \leftarrow 0$\;
        }
    
        $\Delta_{aug1} \leftarrow \Delta_{aug1}+ \gamma_j$\;
        $\Delta_{base2} \leftarrow \Delta_{base2}+ d_j'$\;
        
        \If{$(trans\left(\Delta_{bdy}+\Delta_{base1}+\Delta_{aug1}\right) > \textit{bsf})$ \textbf{or} $(trans\left(\Delta_{bdy}+\Delta_{base2}\right) > \textit{bsf})$}{
            \Return $\infty$ \tcp*{early stopping}
        }
    }
 
    $\mathbb{U}^{\Delta2}\leftarrow$ upper envelope (windowed maxima under $\omega$) induced by $\{d'_j\}$\;

    \For{$i \leftarrow 2$ \KwTo $n-1$}{
        \If{$x_i>\mathbb{U}_i^Q$}{
            $\gamma_i' \leftarrow \Gamma(x_i,\mathbb{U}_i^Q)$\;
        }
        \ElseIf{$x_i<\mathbb{L}_i^Q$}{
            $\gamma_i' \leftarrow \Gamma(x_i,\mathbb{L}_i^Q)$\; 
        }
        \Else{
           $\gamma_i' \leftarrow 0$\; 
        }
        $\Delta_{aug2} \leftarrow \Delta_{aug2}+ \gamma_i'$\;
        \If{ $trans\left(\Delta_{bdy}+\Delta_{base2}+\Delta_{aug2}\right) > \textit{bsf}$ }{
            \Return $\infty$ \tcp*{early stopping}
        }
    }
    
    \Return $trans\left(\Delta_{bdy}+ \max\{\Delta_{base1}+\Delta_{aug1},\Delta_{base2}+\Delta_{aug2}\}\right)$\;
\end{algorithm}

\subsubsection{Algorithm Analysis} 
The upper and lower envelopes can be calculated in $O(m)$ or $O(n)$ time using priority queues, and the time required to traverse each element in $X$ and $Q$ during base and augmentation computations is only constant time.
Therefore, the total time complexity is $O(m+n)$, and the space complexity can also be calculated using a similar logic, which is $O(m+n)$.

\subsubsection{Implementation note (discrete measures).}
For discrete measures with bounded matching costs (e.g., TWED, LCSS, EDR, SWALE), the \textit{DBGLB} variant optimizes the augmentation term by avoiding the need for dynamic envelope computation or maintenance. Instead, \textit{DBGLB} employs a greedy two-pointer approach that efficiently cancels assignments within the allowable warping window, preserving the same bound value and early-stopping criteria.

In the base pass, \textit{DBGLB} computes the base vertex weights for both directions (from \(X\) to \(Q\) and from \(Q\) to \(X\)) using an envelope-based approach similar to \textit{BGLB}. These computed masses represent the minimum coverage demand that must be fulfilled within the warping window. This process is performed once per direction and the results are stored.

During the augmented assignment term, \textit{DBGLB} scans from the earliest uncovered position, using the available mass to satisfy demand. Residual demands that cannot be covered are added to the augmentation term \(\Delta_{\text{aug}}\). The entire process operates in \(O(n+m)\) time, ensuring each element is processed only once.

This implementation also supports early stopping: once the base and boundary terms are computed, and during the augmentation pass, \textit{DBGLB} checks if the partial lower bound exceeds the best-so-far threshold. If so, the algorithm halts early, saving computation time.

By eliminating the need for dynamic envelope computation and relying solely on sequential scans, \textit{DBGLB} is particularly effective for large-scale similarity search tasks where computational efficiency is critical. \footnote{Specific implementation details can be found at \url{https://github.com/BoyuXiao/BGLB/blob/main/Appendix.pdf}.}

\section{Experiments}
\label{sec:exp}

\subsection{Experimental Setup}
\label{sec:exp_setup}

\noindent\textbf{Hardware Environment.}
All experiments are conducted on a machine equipped with an Intel i7-10700 CPU, 32GB RAM and 1TB HDD, running Windows 11.

\noindent\textbf{Dataset.}
All evaluations are conducted on all 128 datasets of UCR time series classification archive~\cite{UCRArchive}.
Each dataset in this archive is pre-partitioned into a training set and a test set.
This benchmark is the most widely adopted in the time series community.\footnote{Following standard practice, we use the equal-length version of the UCR dataset.}

\noindent\textbf{Competitors.}
We compare \textit{BGLB} against 18 existing lower bounds across seven elastic measures, as summarized in Table~\ref{table_LBS}.
Our focus is on general-purpose elastic measures (e.g., in \textit{DTW}-based nearest neighbor search).
Thus, we exclude lower bounds that are specially designed for subsequence matching or self-join tasks~\cite{DBLP:journals/pvldb/ChaoZQW25,DBLP:journals/tkde/ChaoGMLW25}.
All experiments are implemented in C/C++ with \texttt{-O2} compiler optimization, and our code is available at \url{https://github.com/BoyuXiao/BGLB}.

\noindent\textbf{Parameters.}
Unless otherwise specified, we use the standard parameter settings summarized in Table~\ref{table_LBS}.
We set the warping window to $\omega=\mathrm{round}(0.05\,n)$, i.e., 5\% of the sequence length.

\begin{table}[htbp]
\centering
\caption{Comparative lower bounds and default parameters.}
\fontsize{5}{8}\selectfont{
\begin{tabular}{|c|c|c|}
\hline
\textbf{Elastic Measure} & \textbf{Comparative Lower Bounds} & \textbf{Parameters} \\
\hline

\multirow{2}{*}{\textit{DTW}}
& LB\_Kim~\cite{914875}, LB\_Keogh~\cite{10.5555/2993953.2994027,KEOGH2002406}, LB\_Improved~\cite{LEMIRE20092169}
& \multirow{2}{*}{--} \\
& LB\_Petitjean~\cite{WEBB2021107895}, LB\_Webb~\cite{WEBB2021107895},
GLB\_\textit{DTW}~\cite{10.14778/3594512.3594530} 
& \\
\hline

\multirow{2}{*}{\textit{ERP}}
& LB\_\textit{ERP}~\cite{10.5555/1316689.1316758}, LB\_Kim-\textit{ERP}~\cite{10.5555/1316689.1316758}, GLB\_\textit{ERP}~\cite{10.14778/3594512.3594530},
& \multirow{2}{*}{$g=0$} \\
& LB\_Keogh-\textit{ERP}~\cite{10.5555/1316689.1316758}
& \\
\hline

\multirow{1}{*}{\textit{MSM}}
& LB\_\textit{MSM}~\cite{Tan2020FastEE}, GLB\_\textit{MSM}~\cite{10.14778/3594512.3594530}
& $c=0.5$ \\
\hline

\multirow{1}{*}{\textit{TWED}}
& LB\_\textit{TWED}~\cite{Tan2020FastEE}, GLB\_\textit{TWED}~\cite{10.14778/3594512.3594530}
& $\lambda=1,\; \nu=0.0001$ \\
\hline

\multirow{1}{*}{\textit{LCSS}}
& LB\_\textit{LCSS}~\cite{10.1145/956750.956777,Vlachos2006Indexing,994784}, GLB\_\textit{LCSS}~\cite{10.14778/3594512.3594530}
& $\epsilon=0.2$ \\
\hline

\multirow{1}{*}{\textit{EDR}}
& GLB\_\textit{EDR}~\cite{10.14778/3594512.3594530}
& $\epsilon=0.1$ \\
\hline

\multirow{1}{*}{\textit{SWALE}}
& GLB\_\textit{SWALE}~\cite{10.14778/3594512.3594530}
& $\epsilon=0.2,\; p=5,\; r=1$ \\
\hline
\end{tabular}
}
\label{table_LBS}
\end{table}


\begin{figure*}[ht!]
    \centering
    \includegraphics[width=1.0\linewidth]{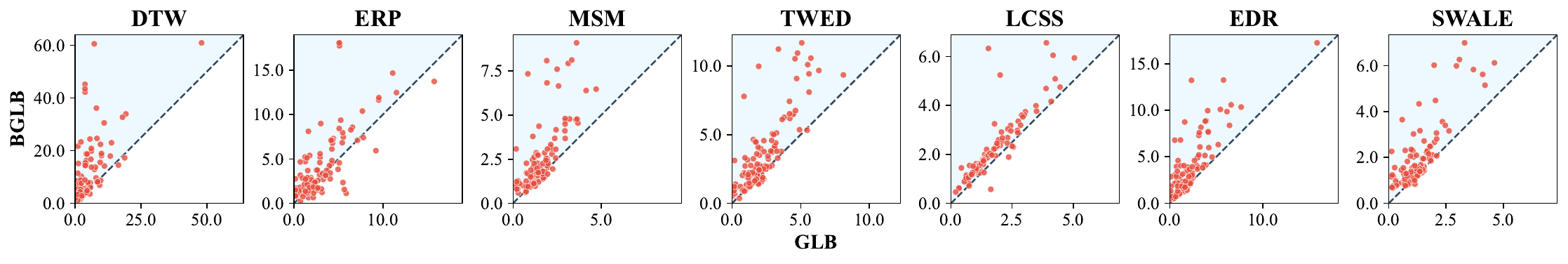}
    \vspace{-0.4cm} 
    \caption{Comparison of Speedup for different lower bounds on the 128 UCR datasets (1-NN search).}
    \label{fig_speedup}
\end{figure*}

\begin{figure*}[ht!]
    \centering
    \includegraphics[width=1.0\linewidth]{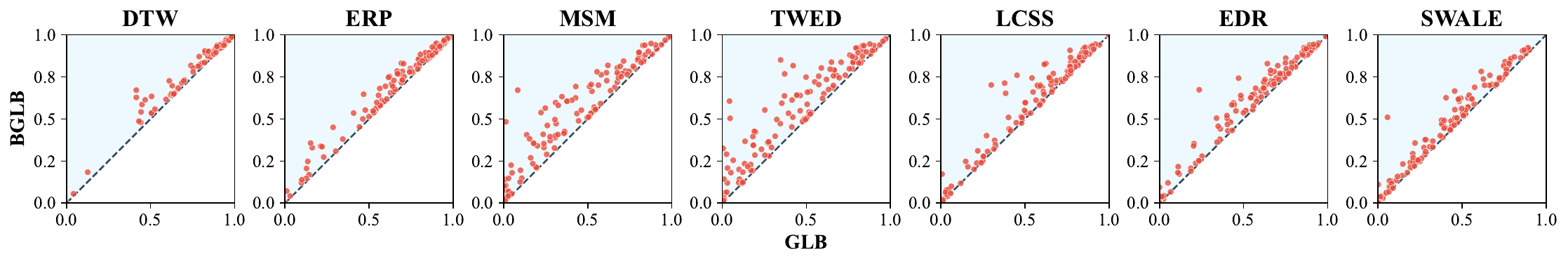}
    \vspace{-0.4cm}
    \caption{Comparison of Pruning rate between \textit{BGLB} and \textit{GLB} on the 128 UCR datasets.}
    \label{fig_pruning_ratio}
\end{figure*}

\begin{figure*}[ht!]
    \centering
    \includegraphics[width=1.0\linewidth]{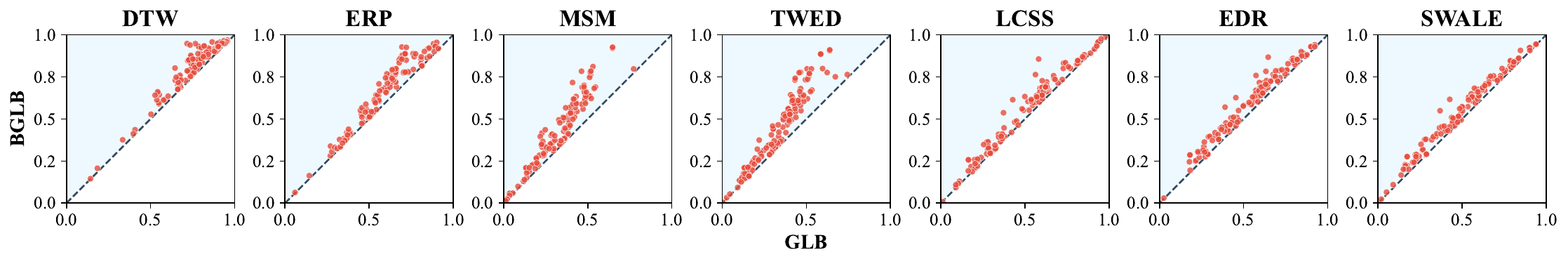}
    \vspace{-0.4cm}
    \caption{Comparison of Average TLB for different lower bounds on the 128 UCR datasets.}
    \label{fig_TLB}
\end{figure*}

\subsection{Speedup Evaluation on 1-NN Similarity Search}
\label{sec:exp_1nn_speed}

We first evaluate the impact of lower bounds on end-to-end exact similarity search using the 1-Nearest Neighbor (1-NN) task, which is a standard workload in time series classification.
In our experiments, each sequence in the test set is used as a query \(Q\), and we perform a 1-NN search against all time series in the training set.
The procedure follows the classic lower-bound filtering pipeline: Since pruning uses a valid lower bound, the returned nearest neighbor (and thus the 1-NN classification label) is identical to the naive exact baseline; we only reduce the amount of computation.
Because our pruning is exact, downstream accuracy is unchanged; we therefore report runtime and pruning effects.
For each candidate \(X\), we compute a lower bound \(LB(Q,X)\), only when \(LB(Q,X) < \textit{bsf}\) (best-so-far) do we compute the exact elastic distance \(ELD(Q,X)\) and potentially update \(\textit{bsf}\).

We report the average speedup compared with the naive 1-NN approach without lower bounds.
Figure~\ref{fig_speedup} summarizes the results over 128 datasets.
Across all measures, \textit{BGLB} consistently improves end-to-end performance over \textit{GLB} (the current SOTA general lower bound),
with average speedups of $84.9\%$ for \textit{DTW}, $24.6\%$ for \textit{ERP}, $37.4\%$ for \textit{MSM}, $48.2\%$ for \textit{TWED}, $63.5\%$ for \textit{EDR}, $39.2\%$ for \textit{SWALE} and $30.4\%$ for \textit{LCSS}.
These gains indicate that tighter bounds translate into fewer expensive exact-distance evaluations and thus faster exact search.

\subsection{Evaluation on Pruning Ratio}
\label{sec:exp_pruning_ratio}

To directly expose the mechanism behind the speedups, we measure the pruning ratio, i.e. the fraction of candidate series that are safely discarded by the lower bound before invoking the exact DP.
Figure~\ref{fig_pruning_ratio} compares \textit{BGLB} and \textit{GLB}.
Across all elastic measures, \textit{BGLB} consistently prunes a larger portion of candidates, explaining the runtime improvements observed in Figure~\ref{fig_speedup}.
This pattern is particularly pronounced on measures where strong DTW-specific bounds do not exist (e.g., MSM/TWED/EDR/SWALE), highlighting the value of a general framework that systematically strengthens pruning power.

\subsection{Tightness of Lower Bounds (TLB)}
\label{sec:exp_tlb}

The pruning behavior above is driven by how tightly a lower bound approximates the exact distance.
Following prior work, we quantify this quality using Tightness of Lower Bound (TLB)~\cite{keogh2001dimensionality},
defined as the average ratio between a lower bound and the exact elastic distance \(ELD(X,Q)\).
To account for potential asymmetry, we use
\(
LB^{*}(X,Q)=\max\{LB(X,Q),\,LB(Q,X)\}.
\)
For each dataset, we set the training set as $\mathbb{X}$ and the test set as $\mathbb{Q}$ and compute
\begin{equation}
\text{TLB} = \mathop{Average}\limits_{X \in \mathbb{X},\, Q \in \mathbb{Q}}
\left(
\frac{LB^{*}(X,Q)}{\text{Real Elastic Distance between }X \text{ and } Q}
\right).
\end{equation}
A larger TLB indicates a tighter bound and, in turn, stronger pruning potential.
By definition, a valid lower bound must satisfy \(LB^{*}(X,Q) \le ELD(X,Q)\) for every pair \((X,Q)\).

Table~\ref{table_tlb_stats} reports TLB statistics across all 128 datasets, summarized by average, minimum, maximum, and quartiles.
Figure~\ref{fig_TLB} further compares \textit{BGLB} and \textit{GLB} on each dataset; points above the diagonal correspond to datasets on which \textit{BGLB} is tighter.
Overall, \textit{BGLB} achieves substantially higher tightness than existing general-purpose lower bounds across all measures; for \textit{DTW}, specialized DTW-specific bounds can be tighter, while \textit{BGLB} remains competitive and provides a favorable tightness--efficiency trade-off (Figure~\ref{fig_the_lower_dound_of_Elatic_distance_G}).

\begin{table}[htbp]
    \centering
    \caption{Tightness of various lower bounds for different elastic similarity measures, evaluated over 128 UCR datasets.}
    \label{table_tlb_stats}
    \scalebox{0.7}{
    \begin{tabular}{llcccccc}
        \toprule
        \textbf{Elastic} & \textbf{Lower} & \multirow{2}{*}{\textbf{Average}} & \multirow{2}{*}{\textbf{Min}} & \multirow{2}{*}{\textbf{Q1}} & \multirow{2}{*}{\textbf{Median}} & \multirow{2}{*}{\textbf{Q3}} & \multirow{2}{*}{\textbf{Max}} \\
        \textbf{Measure} & \textbf{Bounds} & & & & & & \\
        \midrule
        \multirow{6}{*}{\textit{DTW}} 
        & LB\_Kim$^*$ & 0.1668 & 0.0012 & 0.0564 & 0.1106 & 0.2260 & 0.7227 \\
        & LB\_Keogh$^*$ & 0.7470 & 0.0432 & 0.6740 & 0.7593 & 0.8505 & 0.9563 \\
        & LB\_Improved$^*$ & 0.8152 & 0.0514 & 0.7678 & 0.8606 & 0.9105 & 0.9696 \\
        & \underline{LB\_Webb}$^*$ & \underline{0.8342} & \underline{0.1743} & \underline{0.7786} & \underline{0.8834}$^*$ & \underline{0.9312} & \underline{0.9792} \\
        & \textbf{LB\_Petitjean}$^*$ & \textbf{0.8394} & \textbf{0.1743} & \textbf{0.7791} & \textbf{0.8913} & \textbf{0.9378} & \textbf{0.9797} \\
        & GLB\_\textit{DTW}$^*$ & 0.7574 & 0.1431 & 0.6778 & 0.7761 & 0.8513 & 0.9566 \\
        & BGLB\_\textit{DTW}$^*$ & 0.8131 & 0.1451 & 0.7486 & 0.854 & 0.9211 & 0.9688 \\
        \midrule
        \multirow{5}{*}{\textit{ERP}} 
        & LB\_\textit{ERP}$^*$ & 0.0210 & 0.0000 & 0.0000 & 0.0001 & 0.0004 & 0.5946 \\
        & LB\_Kim-\textit{ERP}$^*$ & 0.1360 & 0.0012 & 0.0520 & 0.0937 & 0.1865 & 0.6133 \\
        & LB\_Keogh-\textit{ERP}$^*$ & 0.5723 & 0.0435 & 0.4459 & 0.6051 & 0.6941 & 0.9096 \\
        & GLB\_\textit{ERP}$^*$ & \underline{0.5926} & \underline{0.0591} & \underline{0.4955} & \underline{0.6178} & \underline{0.6970} & \underline{0.9115} \\
        & \textbf{BGLB\_\textit{ERP}}$^*$ & \textbf{0.6675} & \textbf{0.0633} & \textbf{0.5258} & \textbf{0.7155} & \textbf{0.8175} & \textbf{0.9546} \\
        \midrule
        \multirow{3}{*}{\textit{MSM}} 
        & LB\_\textit{MSM}$^*$ & 0.1758 & 0.0038 & 0.1111 & 0.1658 & 0.2428 & 0.3956 \\
        & GLB\_\textit{MSM}$^*$ & \underline{0.3226} & \underline{0.0082} & \underline{0.2255} & \underline{0.3509} & \underline{0.4207} & \underline{0.7730} \\
        & \textbf{BGLB\_\textit{MSM}}$^*$ & \textbf{0.4267} & \textbf{0.0124} & \textbf{0.2918} & \textbf{0.4108} & \textbf{0.5749} & \textbf{0.9265} \\
        \midrule
        \multirow{3}{*}{\textit{TWED}} 
        & LB\_\textit{TWED}$^*$ & 0.0095 & 0.0000 & 0.0006 & 0.0014 & 0.0031 & 0.5980 \\
        & GLB\_\textit{TWED}$^*$ & \underline{0.3454} & \underline{0.0039} & \underline{0.2249} & \underline{0.3643} & \underline{0.4510} & \underline{0.7434} \\
        & \textbf{BGLB\_\textit{TWED}}$^*$ & \textbf{0.4590} &\textbf{ 0.0040} & \textbf{0.2859} & \textbf{0.4740} & \textbf{0.6149} & \textbf{0.9107} \\
        \midrule
        \multirow{3}{*}{\textit{LCSS}} 
        & LB\_\textit{LCSS}$^*$ & 0.4738 & 0.0000 & 0.2791 &0.5357 & 0.6255 & 0.9274 \\
        & \underline{GLB\_\textit{LCSS}}$^*$ & \underline{0.4885} & \underline{0.0000} & \underline{0.2846} & \underline{0.5424} & \underline{0.6298} & \underline{0.9764} \\
        & \textbf{BGLB\_\textit{LCSS}}$^*$ & \textbf{0.5339} & \textbf{0.0000} & \textbf{0.3217} & \textbf{0.6050} & \textbf{0.7043} & \textbf{0.9849} \\
        \midrule
        \multirow{2}{*}{\textit{EDR}} 
        & {GLB\_\textit{EDR}}$^*$ & {0.5343} & {0.0000} &{0.3702} & {0.5732} & {0.6770} & {0.9234} \\
        & \textbf{BGLB\_\textit{EDR}}$^*$ & \textbf{0.5910} & \textbf{0.0000} & \textbf{0.4320} & \textbf{0.6341} & \textbf{0.7524} & \textbf{0.9407} \\
        \midrule
        \multirow{2}{*}{\textit{SWALE}} 
        & {GLB\_\textit{SWALE}}$^*$ & {0.5046} & {0.0189} & {0.3483} &{0.5006} & {0.6575} & {0.9392} \\
        & \textbf{BGLB\_\textit{SWALE}}$^*$ & \textbf{0.5529} & \textbf{0.0231} & \textbf{0.4083} & \textbf{0.5704} & \textbf{0.7111} & \textbf{0.9444} \\
        \bottomrule
    \end{tabular}
    }
\end{table}

\subsection{Complementarity with Pruned DP}
\label{sec:exp_eapruned}

Lower bounds and pruned dynamic programming address distinct bottlenecks in exact elastic similarity search.  
A lower bound performs \emph{candidate-level} filtering by discarding entire candidates before any dynamic programming computation begins,  
whereas pruned DP methods such as \texttt{EAPruned} reduce the amount of \emph{cell-level} computation once a candidate must be evaluated~\cite{herrmann_early_2021}.

Figure~\ref{fig_pruned_combo} demonstrates whether \textit{BGLB} remains useful in the presence of strong DP-level optimizations, where we compare the following pipelines:
(i) \texttt{EAPruned} alone, (ii) \textit{GLB}+\texttt{EAPruned}, and (iii) \textit{BGLB}+\texttt{EAPruned}.

We evaluate these pipelines on four representative UCR datasets that emphasize different computational challenges:  
\emph{ElectricDevices} and \emph{Crop} contain large numbers of time series (with training and test set sizes of $8926{+}7711$ and $7200{+}16800$, respectively),  
highlighting the advantage of candidate-level pruning in large search spaces.  
In contrast, \emph{Rock} and \emph{HouseTwenty} feature long sequences (lengths 2844 and 2000),  
which increase the per-candidate DP cost and thus amplify the benefit of cell-level pruning from \texttt{EAPruned}.
Figure~\ref{fig_pruned_combo} shows that \textit{BGLB} consistently delivers end-to-end speedups even when integrated with \texttt{EAPruned},
indicating that tighter candidate-level filtering remains valuable even when the remaining DP computations are aggressively pruned.

\begin{figure}[ht!]
    \centering
    \includegraphics[width=1.0\linewidth]{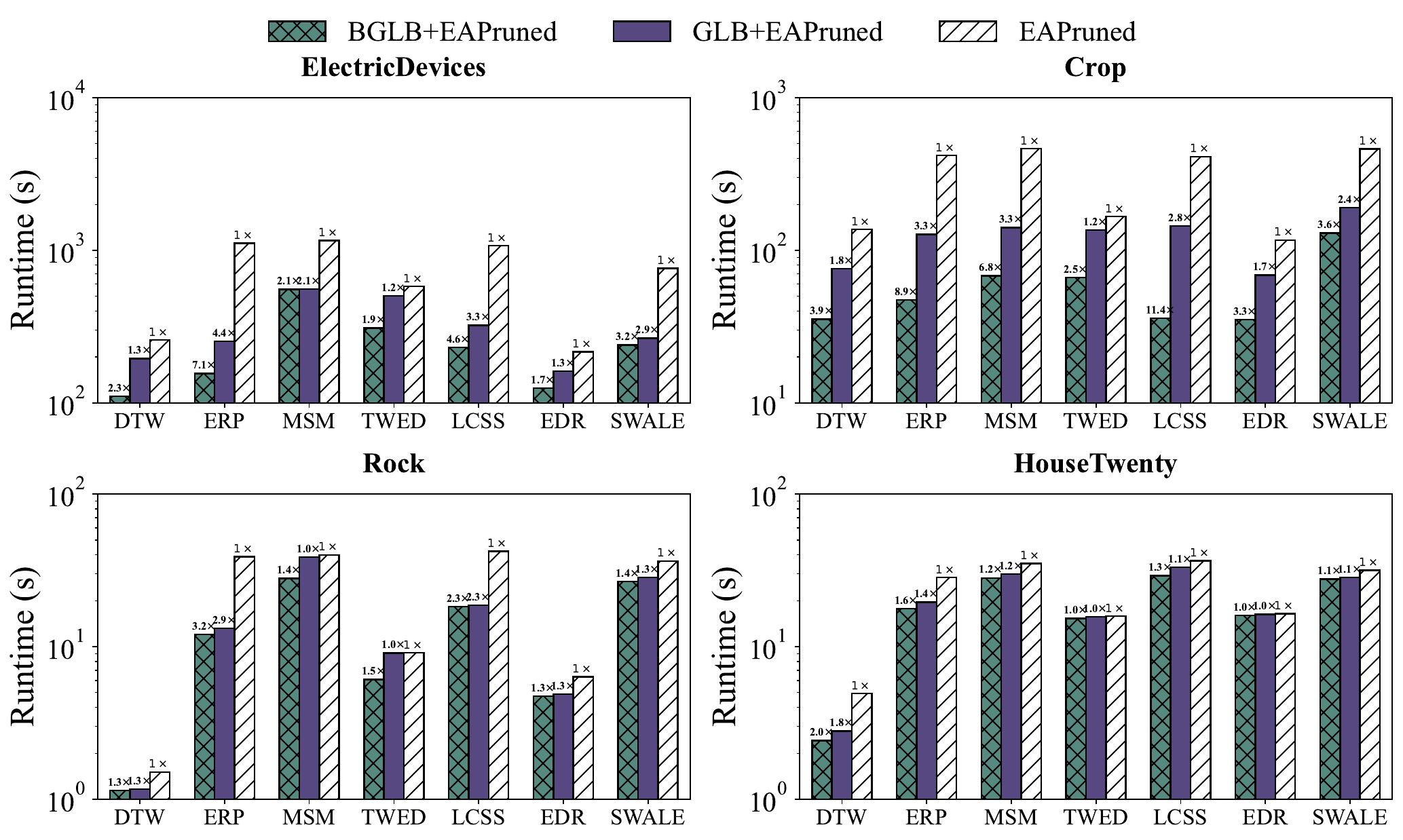}
    \caption{End-to-end runtime with pruned DP: \texttt{EAPruned} alone vs.\ \textit{GLB}+\texttt{EAPruned} vs.\ \textit{BGLB}+\texttt{EAPruned}. Lower bounds provide candidate-level filtering, while \texttt{EAPruned} reduces DP work for the remaining candidates.}
    \label{fig_pruned_combo}
\end{figure}

\subsection{Case Study: Accelerating Exact DBSCAN Clustering}
\label{sec:exp_dbscan}

We further evaluate the practical impact of tighter bounds on a database-style workload beyond 1-NN: exact DBSCAN clustering.
From a database perspective, this workload corresponds to an \emph{exact range query} operator under an elastic distance predicate.
\textit{BGLB} can be integrated as a \emph{plan-level filter} (a safe pre-check) inside query execution, reducing the number of expensive DP evaluations while preserving exact query semantics.
DBSCAN repeatedly issues $\epsilon$-neighborhood (range) queries, which can trigger a large number of elastic distance computations.
Lower bounds act as safe filters: if \(LB(X,Q)>\epsilon\), then \(ELD(X,Q)>\epsilon\) must hold, so the pair cannot be neighbors and can be discarded without affecting clustering correctness.
We use $\textit{minPts}=5$ by default and set $\epsilon$ via a simple measure-adaptive strategy.
Figure~\ref{fig_dbscan} reports end-to-end runtime on representative datasets.
Both \textit{GLB} and \textit{BGLB} significantly accelerate exact DBSCAN compared with the baseline that performs no lower-bound filtering,
and \textit{BGLB} consistently yields the fastest runtime, while preserving identical clustering results (exactness).

Beyond DBSCAN, the same filtering rule applies to any threshold-based operator: for any $\tau$, if $LB(X,Q)>\tau$ then $ELD(X,Q)>\tau$.
This makes the bound a drop-in safe pre-check for exact range queries, as well as for top-$k$ search when using the current $k$-th best-so-far distance as the dynamic threshold.



\begin{figure}[ht!]
    \centering
    \includegraphics[width=1.0\linewidth]{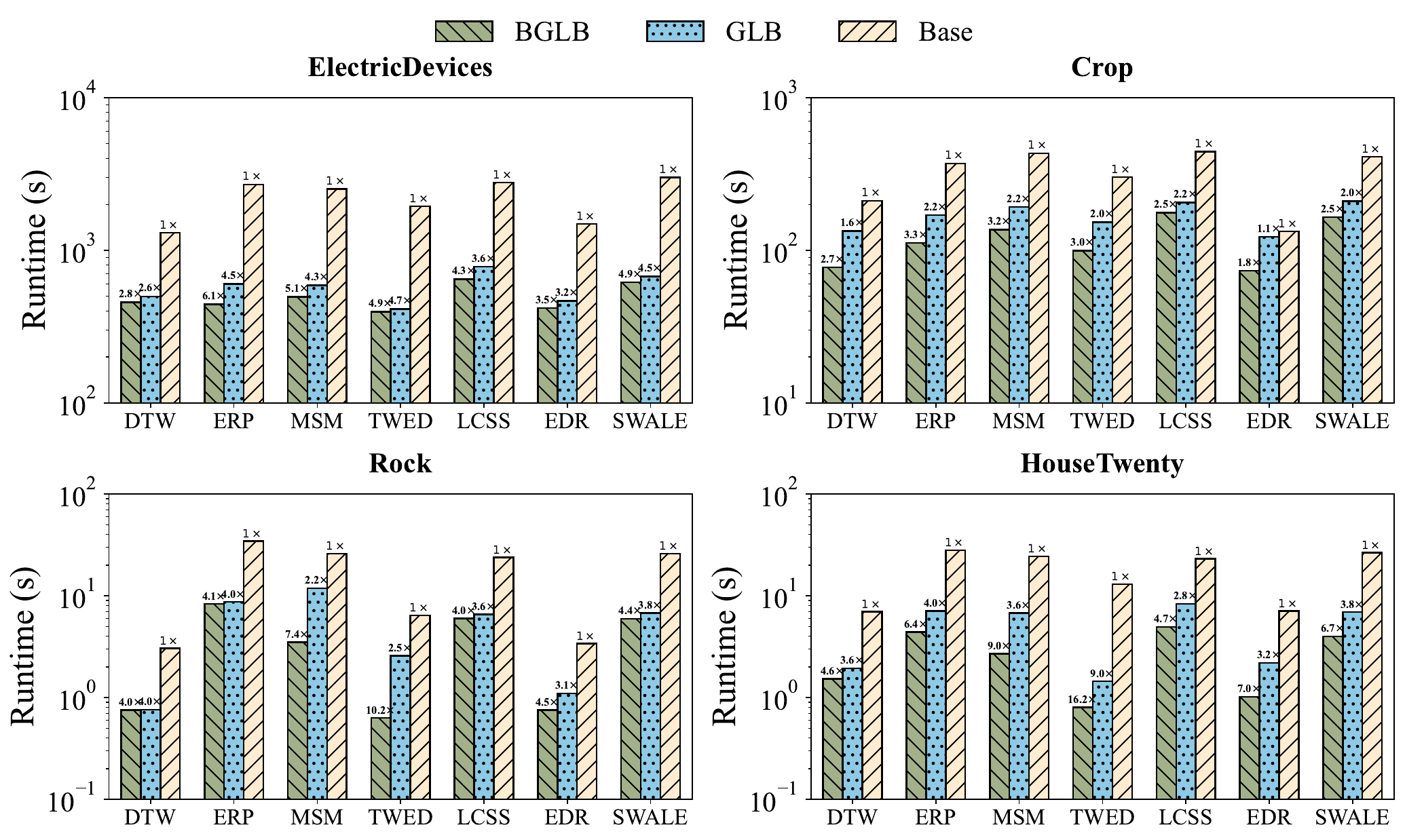}
    \caption{Case study on exact DBSCAN clustering: end-to-end runtime comparison between Base (no lower bound), \textit{GLB}-filtered, and \textit{BGLB}-filtered pipelines. Lower-bound filtering is safe and preserves exact clustering results.}
    \label{fig_dbscan}
\end{figure}

\subsection{Comparing with Specialized Lower Bounds on \textit{DTW}}
\label{sec:exp_dtw_specific}

\textit{DTW} is one of the most widely used elastic similarity measures, and a substantial body of work has developed highly specialized lower bounds for it.
Consequently, \textit{DTW} serves as a representative benchmark to assess how a general lower-bounding framework compares against DTW-specific designs.
Figure~\ref{fig_the_lower_dound_of_Elatic_distance_G} summarizes both tightness and computational cost, where the x-axis is the time to compute a lower bound and the y-axis is its TLB.
In the figure, the \emph{normalized} computation cost is reported as the runtime of each lower bound divided by the runtime of \emph{KimFL}, which is a constant-time baseline defined as $\mathrm{KimFL}(X,Q)=\max\{|x_1-q_1|,\;|x_n-q_m|\}$.
The results show that our method achieves a favorable trade-off between efficiency and pruning power among general-purpose approaches.
\begin{figure}[h]
	\centering
    \includegraphics[width=1\linewidth]{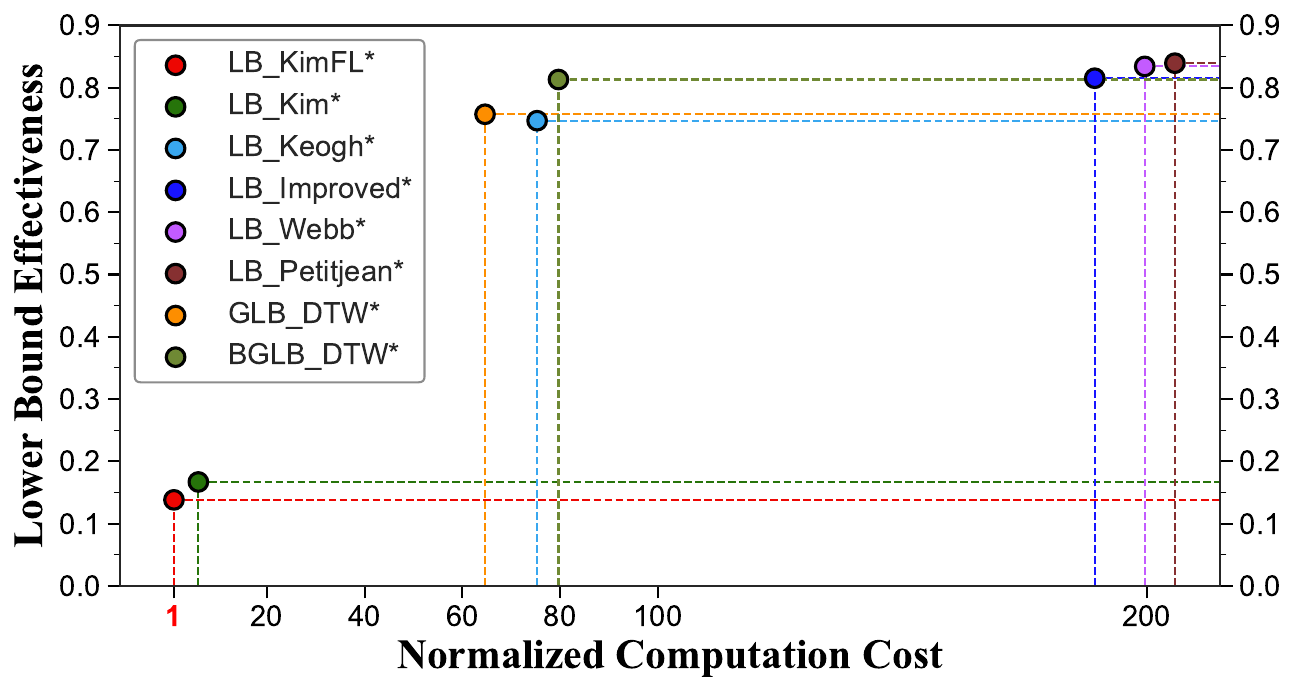}
    \caption{Effectiveness and efficiency of \textit{DTW} lower bounds.\protect\footnotemark}
    \label{fig_the_lower_dound_of_Elatic_distance_G}
\end{figure}

\footnotetext{For inherently symmetric bounds like $LB_{KimFL}$ and $LB_{Kim}$, we compute them only once without taking the maximum.}

\subsection{Impact of Parameters on Filtering Performance}
\label{sec:exp_window}

Finally, we investigate the impact of the warping window size \(\omega\) on the computational efficiency of 1-NN search.
Unless otherwise specified, we use $\omega=\mathrm{round}(0.05\,n)$; here we further vary $\omega$ to study sensitivity.
We set \(\omega\) to $round(0.05n)$, $round(0.1n)$, $round(0.2n)$, $round(0.5n)$, and $round(n)$, where $round(\cdot)$ denotes rounding to the nearest integer.
As shown in Figure~\ref{window}, the speedup of all methods decreases as \(\omega\) increases, since a larger warping window expands the set of feasible alignments and weakens pruning.
Nevertheless, our method consistently achieves greater speedup than existing approaches across all values of \(\omega\), indicating that the augmentation refinement remains beneficial even when the feasible alignment space becomes larger.

\begin{figure}[h]
	\centering
    \includegraphics[width=1\linewidth]{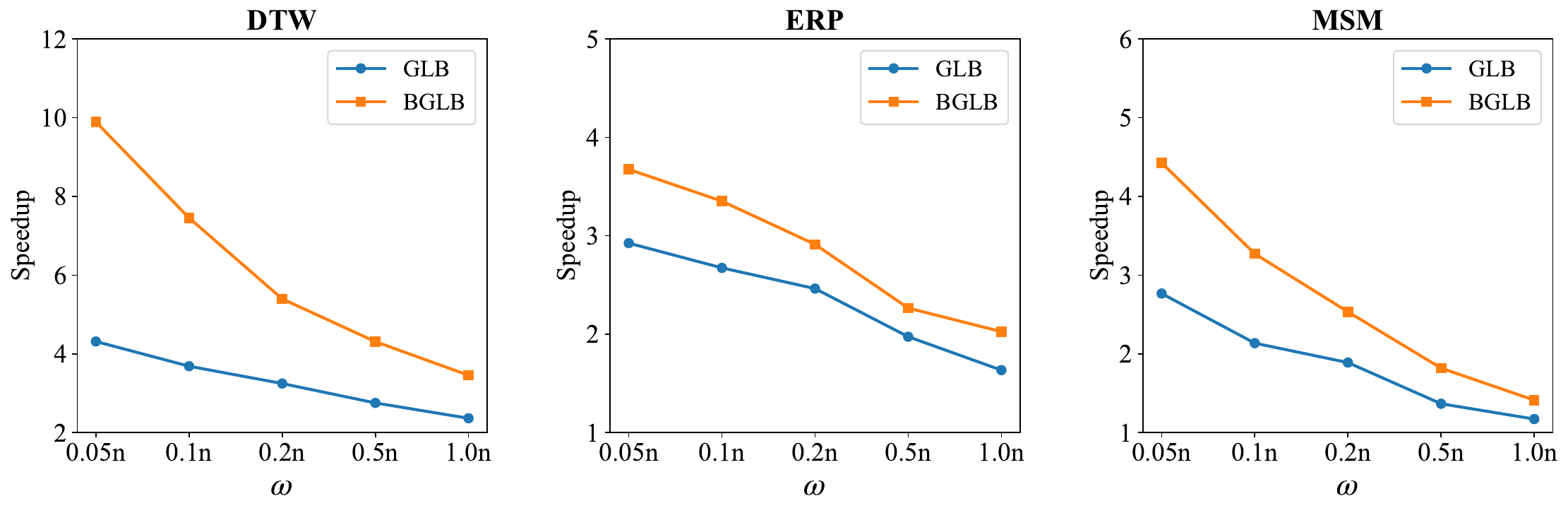}
    \caption{Impact of warping window size \(\omega\) on speedup in 1-NN similarity search.}
    \label{window}
\end{figure}

\section{Conclusion}

This paper proposes a new paradigm for deriving lower bounds of a broad class of elastic similarity measures, which formulate the lower-bounding problem via minimum-weight edge cover on an induced weighted bipartite graph.
Under this paradigm, we design \textit{BGLB}, which is a instantiation that incorporates an additional augmentation term.
Experiments on real-world datasets demonstrate that \textit{BGLB} achieves the tightest lower bounds for (ERP, MSM, TWED, LCSS, EDR and SWALE) and remains highly competitive for \textit{DTW}.
In 1-NN search experiments and DBSCAN, \textit{BGLB} achieves a significant speedup over the state-of-the-art approaches across seven elastic similarity measures.
In fact, \textit{BGLB} naturally fits into a wide range of time series applications, where the lower bound can be placed as an operator-level pre-check to reduce the number of expensive DP verifications while preserving exact semantics.

\begin{acks}
This work is supported by National Natural Science Foundation of China (NSFC) (62232005, 62202126) and China Postdoctoral Science Foundation  NO.2025M774320. We also thank the maintainers of the UCR Time Series Classification Archive for collecting, cleaning, and curating the 128 datasets used in our experiments.
\end{acks}

\bibliographystyle{ACM-Reference-Format}
\bibliography{sample}

@inproceedings{10.1145/3318464.3389760,
author = {Paparrizos, John and Liu, Chunwei and Elmore, Aaron J. and Franklin, Michael J.},
title = {Debunking Four Long-Standing Misconceptions of Time-Series Distance Measures},
year = {2020},
isbn = {9781450367356},
publisher = {Association for Computing Machinery},
address = {New York, NY, USA},
url = {https://doi.org/10.1145/3318464.3389760},
doi = {10.1145/3318464.3389760},
booktitle = {Proceedings of the 2020 ACM SIGMOD International Conference on Management of Data},
pages = {1887–1905},
numpages = {19},
location = {Portland, OR, USA},
series = {SIGMOD '20}
}

@inproceedings{10.1145/2339530.2339576,
author = {Rakthanmanon, Thanawin and Campana, Bilson and Mueen, Abdullah and Batista, Gustavo and Westover, Brandon and Zhu, Qiang and Zakaria, Jesin and Keogh, Eamonn},
title = {Searching and mining trillions of time series subsequences under dynamic time warping},
year = {2012},
isbn = {9781450314626},
publisher = {Association for Computing Machinery},
address = {New York, NY, USA},
url = {https://doi.org/10.1145/2339530.2339576},
doi = {10.1145/2339530.2339576},
booktitle = {Proceedings of the 18th ACM SIGKDD International Conference on Knowledge Discovery and Data Mining},
pages = {262–270},
numpages = {9},
location = {Beijing, China},
series = {KDD '12}
}

@article{10.14778/2735479.2735481,
author = {Ding, Rui and Wang, Qiang and Dang, Yingnong and Fu, Qiang and Zhang, Haidong and Zhang, Dongmei},
title = {YADING: fast clustering of large-scale time series data},
year = {2015},
issue_date = {January 2015},
publisher = {VLDB Endowment},
volume = {8},
number = {5},
issn = {2150-8097},
url = {https://doi.org/10.14778/2735479.2735481},
doi = {10.14778/2735479.2735481},
journal = {Proc. VLDB Endow.},
month = jan,
pages = {473–484},
numpages = {12}
}

@article{10.14778/1454159.1454226,
author = {Ding, Hui and Trajcevski, Goce and Scheuermann, Peter and Wang, Xiaoyue and Keogh, Eamonn},
title = {Querying and mining of time series data: experimental comparison of representations and distance measures},
year = {2008},
issue_date = {August 2008},
publisher = {VLDB Endowment},
volume = {1},
number = {2},
issn = {2150-8097},
url = {https://doi.org/10.14778/1454159.1454226},
doi = {10.14778/1454159.1454226},
journal = {Proc. VLDB Endow.},
month = aug,
pages = {1542–1552},
numpages = {11}
}

@inproceedings{10.1145/2020408.2020607,
author = {Kashyap, Shrikant and Karras, Panagiotis},
title = {Scalable kNN search on vertically stored time series},
year = {2011},
isbn = {9781450308137},
publisher = {Association for Computing Machinery},
address = {New York, NY, USA},
url = {https://doi.org/10.1145/2020408.2020607},
doi = {10.1145/2020408.2020607},
booktitle = {Proceedings of the 17th ACM SIGKDD International Conference on Knowledge Discovery and Data Mining},
pages = {1334–1342},
numpages = {9},
location = {San Diego, California, USA},
series = {KDD '11}
}

@article{10.1145/253262.253332,
author = {Korn, Flip and Jagadish, H. V. and Faloutsos, Christos},
title = {Efficiently supporting ad hoc queries in large datasets of time sequences},
year = {1997},
issue_date = {June 1997},
publisher = {Association for Computing Machinery},
address = {New York, NY, USA},
volume = {26},
number = {2},
issn = {0163-5808},
url = {https://doi.org/10.1145/253262.253332},
doi = {10.1145/253262.253332},
journal = {SIGMOD Rec.},
month = jun,
pages = {289–300},
numpages = {12}
}

@article{10.1145/2000824.2000827,
author = {Papapetrou, Panagiotis and Athitsos, Vassilis and Potamias, Michalis and Kollios, George and Gunopulos, Dimitrios},
title = {Embedding-based subsequence matching in time-series databases},
year = {2011},
issue_date = {August 2011},
publisher = {Association for Computing Machinery},
address = {New York, NY, USA},
volume = {36},
number = {3},
issn = {0362-5915},
url = {https://doi.org/10.1145/2000824.2000827},
doi = {10.1145/2000824.2000827},
journal = {ACM Trans. Database Syst.},
month = aug,
articleno = {17},
numpages = {39}
}

@INPROCEEDINGS{5693959,
  author={Camerra, Alessandro and Palpanas, Themis and Shieh, Jin and Keogh, Eamonn},
  booktitle={2010 IEEE International Conference on Data Mining}, 
  title={iSAX 2.0: Indexing and Mining One Billion Time Series}, 
  year={2010},
  volume={},
  number={},
  pages={58-67},
  doi={10.1109/ICDM.2010.124}}

@article{10.14778/2735471.2735476,
author = {Begum, Nurjahan and Keogh, Eamonn},
title = {Rare time series motif discovery from unbounded streams},
year = {2014},
issue_date = {October 2014},
publisher = {VLDB Endowment},
volume = {8},
number = {2},
issn = {2150-8097},
url = {https://doi.org/10.14778/2735471.2735476},
doi = {10.14778/2735471.2735476},
journal = {Proc. VLDB Endow.},
month = oct,
pages = {149–160},
numpages = {12}
}

@INPROCEEDINGS{7837992,
  author={Yeh, Chin-Chia Michael and Zhu, Yan and Ulanova, Liudmila and Begum, Nurjahan and Ding, Yifei and Dau, Hoang Anh and Silva, Diego Furtado and Mueen, Abdullah and Keogh, Eamonn},
  booktitle={2016 IEEE 16th International Conference on Data Mining (ICDM)}, 
  title={Matrix Profile I: All Pairs Similarity Joins for Time Series: A Unifying View That Includes Motifs, Discords and Shapelets}, 
  year={2016},
  volume={},
  number={},
  pages={1317-1322},
  doi={10.1109/ICDM.2016.0179}}

@article{10.14778/3551793.3551830,
author = {Paparrizos, John and Boniol, Paul and Palpanas, Themis and Tsay, Ruey S. and Elmore, Aaron and Franklin, Michael J.},
title = {Volume under the surface: a new accuracy evaluation measure for time-series anomaly detection},
year = {2022},
issue_date = {July 2022},
publisher = {VLDB Endowment},
volume = {15},
number = {11},
issn = {2150-8097},
url = {https://doi.org/10.14778/3551793.3551830},
doi = {10.14778/3551793.3551830},
journal = {Proc. VLDB Endow.},
month = jul,
pages = {2774–2787},
numpages = {14}
}

@inproceedings{breunig2000lof,
  title={LOF: identifying density-based local outliers},
  author={Breunig, Markus M and Kriegel, Hans-Peter and Ng, Raymond T and Sander, J{\"o}rg},
  booktitle={Proceedings of the 2000 ACM SIGMOD international conference on Management of data},
  pages={93--104},
  year={2000}
}

@article{10.1007/s10618-016-0483-9,
author = {Bagnall, Anthony and Lines, Jason and Bostrom, Aaron and Large, James and Keogh, Eamonn},
title = {The great time series classification bake off: a review and experimental evaluation of recent algorithmic advances},
year = {2017},
issue_date = {May       2017},
publisher = {Kluwer Academic Publishers},
address = {USA},
volume = {31},
number = {3},
issn = {1384-5810},
url = {https://doi.org/10.1007/s10618-016-0483-9},
doi = {10.1007/s10618-016-0483-9},
journal = {Data Min. Knowl. Discov.},
month = may,
pages = {606–660},
numpages = {55}
}

@article{10.14778/3342263.3342648,
author = {Paparrizos, John and Franklin, Michael J.},
title = {GRAIL: efficient time-series representation learning},
year = {2019},
issue_date = {July 2019},
publisher = {VLDB Endowment},
volume = {12},
number = {11},
issn = {2150-8097},
url = {https://doi.org/10.14778/3342263.3342648},
doi = {10.14778/3342263.3342648},
journal = {Proc. VLDB Endow.},
month = jul,
pages = {1762–1777},
numpages = {16}
}

@inproceedings{Sakoe1971ADP,
  title={A Dynamic Programming Approach to Continuous Speech Recognition},
  author={Hiroaki Sakoe and Seibi Chiba},
  year={1971},
  url={https://api.semanticscholar.org/CorpusID:107516844}
}

@ARTICLE{1163055,
  author={Sakoe, H. and Chiba, S.},
  journal={IEEE Transactions on Acoustics, Speech, and Signal Processing}, 
  title={Dynamic programming algorithm optimization for spoken word recognition}, 
  year={1978},
  volume={26},
  number={1},
  pages={43-49},
  doi={10.1109/TASSP.1978.1163055}}

@inproceedings{10.5555/1316689.1316758,
author = {Chen, Lei and Ng, Raymond},
title = {On the marriage of Lp-norms and edit distance},
year = {2004},
isbn = {0120884690},
publisher = {VLDB Endowment},
booktitle = {Proceedings of the Thirtieth International Conference on Very Large Data Bases - Volume 30},
pages = {792–803},
numpages = {12},
location = {Toronto, Canada},
series = {VLDB '04}
}

@ARTICLE{4479483,
  author={Marteau, Pierre-François},
  journal={IEEE Transactions on Pattern Analysis and Machine Intelligence}, 
  title={Time Warp Edit Distance with Stiffness Adjustment for Time Series Matching}, 
  year={2009},
  volume={31},
  number={2},
  pages={306-318},
  doi={10.1109/TPAMI.2008.76}}

@ARTICLE{6189346,
  author={Stefan, Alexandra and Athitsos, Vassilis and Das, Gautam},
  journal={IEEE Transactions on Knowledge and Data Engineering}, 
  title={The Move-Split-Merge Metric for Time Series}, 
  year={2013},
  volume={25},
  number={6},
  pages={1425-1438},
  doi={10.1109/TKDE.2012.88}}

@inproceedings{10.1145/1066157.1066213,
author = {Chen, Lei and \"{O}zsu, M. Tamer and Oria, Vincent},
title = {Robust and fast similarity search for moving object trajectories},
year = {2005},
isbn = {1595930604},
publisher = {Association for Computing Machinery},
address = {New York, NY, USA},
url = {https://doi.org/10.1145/1066157.1066213},
doi = {10.1145/1066157.1066213},
booktitle = {Proceedings of the 2005 ACM SIGMOD International Conference on Management of Data},
pages = {491–502},
numpages = {12},
location = {Baltimore, Maryland},
series = {SIGMOD '05}
}

@INPROCEEDINGS{994784,
  author={Vlachos, M. and Kollios, G. and Gunopulos, D.},
  booktitle={Proceedings 18th International Conference on Data Engineering}, 
  title={Discovering similar multidimensional trajectories}, 
  year={2002},
  volume={},
  number={},
  pages={673-684},
  doi={10.1109/ICDE.2002.994784}}

@inproceedings{10.1145/1247480.1247544,
author = {Morse, Michael D. and Patel, Jignesh M.},
title = {An efficient and accurate method for evaluating time series similarity},
year = {2007},
isbn = {9781595936868},
publisher = {Association for Computing Machinery},
address = {New York, NY, USA},
url = {https://doi.org/10.1145/1247480.1247544},
doi = {10.1145/1247480.1247544},
booktitle = {Proceedings of the 2007 ACM SIGMOD International Conference on Management of Data},
pages = {569–580},
numpages = {12},
location = {Beijing, China},
series = {SIGMOD '07}
}

@misc{UCRArchive,
title={The UCR Time Series Classification Archive},
author={ Chen, Yanping and Keogh, Eamonn and Hu, Bing and Begum, Nurjahan and Bagnall, Anthony and Mueen, Abdullah and Batista, Gustavo},
year={2015},
month={July},
note = {\url{www.cs.ucr.edu/~eamonn/time_series_data/}}
}

@INPROCEEDINGS{914875,
  author={Sang-Wook Kim and Sanghyun Park and Chu, W.W.},
  booktitle={Proceedings 17th International Conference on Data Engineering}, 
  title={An index-based approach for similarity search supporting time warping in large sequence databases}, 
  year={2001},
  volume={},
  number={},
  pages={607-614},
  doi={10.1109/ICDE.2001.914875}}

@article{10.5555/2993953.2994027,
author = {Keogh, Eamonn and Ratanamahatana, Chotirat Ann},
title = {Exact indexing of dynamic time warping},
year = {2005},
issue_date = {March 2005},
publisher = {Springer-Verlag},
address = {Berlin, Heidelberg},
volume = {7},
number = {3},
issn = {0219-1377},
journal = {Knowl. Inf. Syst.},
month = mar,
pages = {358–386},
numpages = {29}
}

@incollection{KEOGH2002406,
title = {Chapter 36 - Exact Indexing of Dynamic Time Warping},
editor = {Philip A. Bernstein and Yannis E. Ioannidis and Raghu Ramakrishnan and Dimitris Papadias},
booktitle = {VLDB '02: Proceedings of the 28th International Conference on Very Large Databases},
publisher = {Morgan Kaufmann},
address = {San Francisco},
pages = {406-417},
year = {2002},
isbn = {978-1-55860-869-6},
doi = {https://doi.org/10.1016/B978-155860869-6/50043-3},
url = {https://www.sciencedirect.com/science/article/pii/B9781558608696500433},
author = {Eamonn Keogh}
}

@inbook{doi:10.1137/1.9781611975321.27,
author = {Yilin Shen and Yanping Chen and Eamonn Keogh and Hongxia Jin},
title = {Accelerating Time Series Searching with Large Uniform Scaling},
booktitle = {Proceedings of the 2018 SIAM International Conference on Data Mining (SDM)},
chapter = {},
pages = {234-242},
doi = {10.1137/1.9781611975321.27},
URL = {https://epubs.siam.org/doi/abs/10.1137/1.9781611975321.27},
eprint = {https://epubs.siam.org/doi/pdf/10.1137/1.9781611975321.27}
}

@article{LEMIRE20092169,
title = {Faster retrieval with a two-pass dynamic-time-warping lower bound},
journal = {Pattern Recognition},
volume = {42},
number = {9},
pages = {2169-2180},
year = {2009},
issn = {0031-3203},
doi = {https://doi.org/10.1016/j.patcog.2008.11.030},
url = {https://www.sciencedirect.com/science/article/pii/S0031320308004925},
author = {Daniel Lemire}
}

@article{WEBB2021107895,
title = {Tight lower bounds for dynamic time warping},
journal = {Pattern Recognition},
volume = {115},
pages = {107895},
year = {2021},
issn = {0031-3203},
doi = {https://doi.org/10.1016/j.patcog.2021.107895},
url = {https://www.sciencedirect.com/science/article/pii/S0031320321000820},
author = {Geoffrey I. Webb and François Petitjean}
}

@inproceedings{10.1145/956750.956777,
author = {Vlachos, Michail and Hadjieleftheriou, Marios and Gunopulos, Dimitrios and Keogh, Eamonn},
title = {Indexing multi-dimensional time-series with support for multiple distance measures},
year = {2003},
isbn = {1581137370},
publisher = {Association for Computing Machinery},
address = {New York, NY, USA},
url = {https://doi.org/10.1145/956750.956777},
doi = {10.1145/956750.956777},
booktitle = {Proceedings of the Ninth ACM SIGKDD International Conference on Knowledge Discovery and Data Mining},
pages = {216–225},
numpages = {10},
location = {Washington, D.C.},
series = {KDD '03}
}

@article{Vlachos2006Indexing,
  author    = {Vlachos, Michail and Hadjieleftheriou, Marios and Gunopulos, Dimitrios and Keogh, Eamonn},
  title     = {Indexing Multidimensional Time-Series},
  journal   = {The VLDB Journal},
  volume    = {15},
  pages     = {1--20},
  year      = {2006},
  doi       = {10.1007/s00778-004-0144-2},
  url       = {https://doi.org/10.1007/s00778-004-0144-2}, 
}

@article{Tan2020FastEE,
  author    = {Tan, Chang Wei and Petitjean, François and Webb, Geoffrey I.},
  title     = {FastEE: Fast Ensembles of Elastic Distances for time series classification},
  journal   = {Data Mining and Knowledge Discovery},
  volume    = {34},
  pages     = {231--272},
  year      = {2020},
  doi       = {10.1007/s10618-019-00663-x},
  url       = {https://doi.org/10.1007/s10618-019-00663-x}
}

@inproceedings{10.5555/3000850.3000887,
author = {Berndt, Donald J. and Clifford, James},
title = {Using dynamic time warping to find patterns in time series},
year = {1994},
publisher = {AAAI Press},
booktitle = {Proceedings of the 3rd International Conference on Knowledge Discovery and Data Mining},
pages = {359–370},
numpages = {12},
location = {Seattle, WA},
series = {AAAIWS'94}
}

@article{10.1145/2949741.2949758,
author = {Paparrizos, John and Gravano, Luis},
title = {k-Shape: Efficient and Accurate Clustering of Time Series},
year = {2016},
issue_date = {March 2016},
publisher = {Association for Computing Machinery},
address = {New York, NY, USA},
volume = {45},
number = {1},
issn = {0163-5808},
url = {https://doi.org/10.1145/2949741.2949758},
doi = {10.1145/2949741.2949758},
journal = {SIGMOD Rec.},
month = jun,
pages = {69–76},
numpages = {8}
}

@inbook{10.5555/108235.108244,
author = {Sakoe, Hiroaki and Chiba, Seibi},
title = {Dynamic programming algorithm optimization for spoken word recognition},
year = {1990},
isbn = {1558601244},
publisher = {Morgan Kaufmann Publishers Inc.},
address = {San Francisco, CA, USA},
booktitle = {Readings in Speech Recognition},
pages = {159–165},
numpages = {7}
}

@article{10.14778/3594512.3594530,
author = {Paparrizos, John and Wu, Kaize and Elmore, Aaron and Faloutsos, Christos and Franklin, Michael J.},
title = {Accelerating Similarity Search for Elastic Measures: A Study and New Generalization of Lower Bounding Distances},
year = {2023},
issue_date = {April 2023},
publisher = {VLDB Endowment},
volume = {16},
number = {8},
issn = {2150-8097},
url = {https://doi.org/10.14778/3594512.3594530},
doi = {10.14778/3594512.3594530},
journal = {Proc. VLDB Endow.},
month = apr,
pages = {2019–2032},
numpages = {14}
}

@INPROCEEDINGS{7354388,
  author={Abboud, Amir and Backurs, Arturs and Williams, Virginia Vassilevska},
  booktitle={2015 IEEE 56th Annual Symposium on Foundations of Computer Science}, 
  title={Tight Hardness Results for LCS and Other Sequence Similarity Measures}, 
  year={2015},
  volume={},
  number={},
  pages={59-78},
  doi={10.1109/FOCS.2015.14}}

@article{DBLP:journals/pvldb/ChaoZQW25,
  author       = {Zemin Chao and
                  Qiaoyi Zheng and
                  Zhixin Qi and
                  Hongzhi Wang},
  title        = {{FSMDTW:} {A} Fast Index-free Subsequence Matching Algorithm for Dynamic
                  Time Warping},
  journal      = {Proc. {VLDB} Endow.},
  volume       = {18},
  number       = {10},
  pages        = {3628--3640},
  year         = {2025},
  url          = {https://www.vldb.org/pvldb/vol18/p3628-wang.pdf},
  bibsource    = {dblp computer science bibliography, https://dblp.org}
}

@article{DBLP:journals/tkde/ChaoGMLW25,
  author       = {Zemin Chao and
                  Hong Gao and
                  Dongjing Miao and
                  Jianzhong Li and
                  Hongzhi Wang},
  title        = {An Amortized {O(1)} Lower Bound for Dynamic Time Warping in Motif
                  Discovery},
  journal      = {{IEEE} Trans. Knowl. Data Eng.},
  volume       = {37},
  number       = {5},
  pages        = {2239--2252},
  year         = {2025},
  url          = {https://doi.org/10.1109/TKDE.2025.3544751},
  doi          = {10.1109/TKDE.2025.3544751},
  bibsource    = {dblp computer science bibliography, https://dblp.org}
}

@article{keogh2001dimensionality,
  title={Dimensionality reduction for fast similarity search in large time series databases},
  author={Keogh, Eamonn and Chakrabarti, Kaushik and Pazzani, Michael and Mehrotra, Sharad},
  journal={Knowledge and information Systems},
  volume={3},
  pages={263--286},
  year={2001},
  publisher={Springer}
}

@article{herrmann_early_2021,
	title = {Early abandoning and pruning for elastic distances including dynamic time warping},
	volume = {35},
	issn = {1573-756X},
	url = {https://doi.org/10.1007/s10618-021-00782-4},
	doi = {10.1007/s10618-021-00782-4},
	number = {6},
	journal = {Data Mining and Knowledge Discovery},
	author = {Herrmann, Matthieu and Webb, Geoffrey I.},
	month = nov,
	year = {2021},
	pages = {2577--2601},
}

\end{document}